\setlist[enumerate,1]{label={\emph{\roman*)}}}
\numberwithin{equation}{section}
\let\doendproof\endproof
\renewcommand\endproof{~\hfill\qed\doendproof}
\spnewtheorem*{remark*}{Remark}{\bfseries}{\rmfamily}
\spnewtheorem{assumption}{Assumption}{\bfseries}{\rmfamily}
\spnewtheorem{terminology}{Terminology}{\bfseries}{\rmfamily}
\crefname{section}{Sect.}{Sect.}
\Crefname{section}{Section}{Sections}
\crefname{figure}{Fig.}{Fig.}
\Crefname{figure}{Figure}{Figures}
\DeclareFontFamily{U}{matha}{\hyphenchar\font45}
\DeclareFontShape{U}{matha}{m}{n}{ <-6> matha5 <6-7> matha6 <7-8>
matha7 <8-9> matha8 <9-10> matha9 <10-12> matha10 <12-> matha12 }{}
\DeclareSymbolFont{matha}{U}{matha}{m}{n}
\DeclareFontFamily{U}{mathx}{\hyphenchar\font45}
\DeclareFontShape{U}{mathx}{m}{n}{ <-6> mathx5 <6-7> mathx6 <7-8>
mathx7 <8-9> mathx8 <9-10> mathx9 <10-12> mathx10 <12-> mathx12 }{}
\DeclareSymbolFont{mathx}{U}{mathx}{m}{n}
\DeclareMathDelimiter{\llbracket}{4}{matha}{"76}{mathx}{"30}
\DeclareMathDelimiter{\rrbracket}{5}{matha}{"77}{mathx}{"38}
\newcommand{\defeq}{\coloneqq}
\newcommand{\set}[2]{\{\,#1\mid#2\,\}}
\newcommand{\boplus}{\boldsymbol{\oplus}}
\newcommand{\id}{\mathit{id}}
\newcommand{\sem}[1]{\left\llbracket#1\right\rrbracket}
\newcommand{\UKAddr}{\mathbf{Addr}_{\mathrm{UK}}}
\newcommand{\UKPost}{\mathbf{Post}_{\mathrm{UK}}}
\newcommand{\UKPPost}{\mathbf{PPost}_{\mathrm{UK}}}
\newcommand{\UKPop}{\mathbf{Pop}_{\mathrm{UK}}}
\newcommand{\OPCM}{\mathbf{PCM}_\preceq}
\newcommand{\Pow}{\mathbb{P}}
\newcommand{\op}{\mathrm{op}}
\title{An algebraic theory for data linkage\thanks{This research was supported
    by the EPSRC project \emph{Data Release---Trust, Identity, Privacy and
  Security (EP/ N028139/1 and EP/N027825/1)}.}} \author{Liang-Ting Chen \and
  Markus
Roggenbach \and John V.\ Tucker}
\authorrunning{L.-T.~Chen \and M.\ Roggenbach \and J.V.\ Tucker}
\institute{Department of Computer Science, Computational Foundry, College of
  Science \\ Bay Campus, Swansea University, Swansea SA1 8EN,
UK}
\begin{document}

\maketitle

\begin{abstract}
  There are countless sources of data available to governments, companies, and
  citizens, which can be combined for good or evil. We analyse  the concepts of
  combining data from common sources and linking data from different sources. 
  We model the data and its information content to
  be found in a single source by a partial ordered monoid, and the transfer of
  information between sources by different types of morphisms. To capture the
  linkage between a family of sources, we use a form of Grothendieck
  construction to create a partial ordered monoid that brings together the
  global data of the family in a single structure. We apply our approach to
  database theory and axiomatic structures in approximate reasoning. Thus,
  partial ordered monoids provide a foundation for the algebraic study for
  information gathering in its most primitive form. 
\end{abstract}

\section{Introduction}
There are countless public and private sources of data that can be linked and
analysed for all sorts of reasons, and with all sorts of consequences. The
extraordinary variety of what may be considered data---i.e., data that is
informative in some way---is a challenge to attempts to discover general
principles and techniques for understanding linkage. Motivated by movements for
data sharing we try to uncover general structures common to disparate
situations.

\subsection{Motivation: Exploiting open datasets}

The vast stores of data built up by governments, agencies, institutions and
companies in the course of their operations hold information of value in diverse
and unexpected situations. Some governments have launched initiatives to
encourage bodies to share their data with other organisations and the public.
The released open data is intended to improve transparency, allowing
accountability and engagement with decision making. A systematic review is
\cite{Attard2015}. 

For example, in the UK, there are several national and local registers and a
plethora of statistical data that are now widely shared.
A simple example of the commercial use of open datasets are web services for
selling and letting properties such as Zoopla. In addition to traditional
information about a property, official financial data about local house sales and
crime statistics are provided. 

The UK's Open Data Initiative demonstrates the ambition  to publish internal
government data as open datasets.  There are many patterns of data sharing, of
which three are particularly important:
\begin{enumerate*}
  \item making data public---data release into the wild;
  \item data sharing by contract with a data analysis organisation; and
  \item data sharing with delegation to a new data controller for further onward
    sharing.
\end{enumerate*}
However, data custodians have a legal duty, and a social duty of care,
to ensure that privacy is not breached by the release of open data
sets.

The technical question arises: What information is revealed by, or can be
inferred from, the data?  Naturally, prior to its release, a data set can be
filtered and anonymised but 
\begin{enumerate*}
  \item anonymisation is difficult and often flawed; and
  \item data from various other sources can be combined with a given
    data set to reveal much more.
\end{enumerate*}
There are many data sources to call upon, and many unknown unintended
consequences in making data publicly available.

An early example is Sweeney's finding~\cite{Sweeney1997} that $97\%$
of voters in Cambridge, Massachusetts, USA, can be uniquely identified
by birth dates and postcodes; these can be further
linked with a hospital discharge database to discover individuals'
medical history---e.g., of the governor of Massachusetts at that time~\cite{Sweeney2002a}.

Lately, Narayanan and Shmatikov~\cite{Narayanan2008} devised an
algorithm exploiting sparsity to combine datasets. As a case study
they analysed the Netflix prize dataset and found \textquote{$84\%$
  of (Netflix) subscribers present in the dataset can be uniquely
  identified if the adversary knows six out of eight movies outside
  the top 500} that the subscriber rated.  Such source of film ratings
may come from social engineering or the Internet Movie Database
(IMDb). In response to these privacy concerns, Netflix decided to
withdraw the datasets. Unfortunately, they are still available to
download using BitTorrent or \url{https://archive.org}.

\subsection{Algebraic models of combination and linkage}

In this paper we take a fresh look at the challenge of combining data
sets and linking pieces of data. Our aim is to develop abstract tools to
analyse formally the general nature of data sharing, and technical
issues of policy specification and compliance.  To this end, we
seek algebras of data representations, whose operations combine
two or more pieces of data from the same source to form
data with higher information content. These data representation algebras are to be
defined axiomatically. In its simplest form---that presented here---such an
algebra is an ordered structure with a partial commutative binary operation
$\oplus$ and an identity element~$0$, namely, an \emph{ordered partial commutative
monoid}. 
The operation $\oplus$ \emph{combines} data from the same source. Morphisms
between such monoids model the transfer of data between sources---a process we
call \emph{linkage}.  We create an ordered partial commutative monoid that
brings together all the data from a family of sources using a simplified
Grothendieck construction. We show that our monoid theory of linkage applies to
databases and approximate reasoning. 

\section{Algebras for data combination}

\subsection{Information ordering}\label{sec:info-order}
Data itself is often hierarchical or due to uncertainty becomes so. In this
paper, when we
reason about data, we implicitly work on a set with an ordering that measures
specificity, knowledge, or informativeness. Ideas of information
ordering are nothing new, as they appear to be well-known to different
communities working on uncertainty reasoning~\cite[Section~2.7]{Halpern2003},
multi-valued logic~\cite{Belnap1977}, program
semantics~\cite{Scott1976}, formal concept
analysis~\cite[Chapter~3]{Davey2002a}, and (implicitly) anonymisation
techniques~\cite{Machanavajjhala2007,Wang2014}, to name but a few.

\begin{definition}
  Given a set $X$, an \emph{information order} $\preceq$ on $X$ is a preorder,
  i.e.\
  \begin{enumerate*}
    \item $x \preceq x$ and
    \item $x \preceq y \preceq z$ implies $x \preceq z$.
  \end{enumerate*}
  An \emph{information space} is merely a preordered set $(X, \preceq)$. 
\end{definition}
To illustrate the use of preordered sets in the context of data release and
privacy, we discuss in some detail the use of postcodes to identify locations.

\begin{example} \label{ex:british-postcode}
The taxonomic
hierarchy of British postal codes mostly consists of $6$ to $8$
alphanumeric characters in a format detailed below. Each postcode is divided
into the outward code and the inward code by a single space `\textvisiblespace'.
Each component is formed of further two further parts and each part covers a
smaller area.  For example, \texttt{SA2\textvisiblespace8PP} is the full
postcode of the Singleton Campus of Swansea University and it is understood as follows:
\begin{center}
  \begin{tabular}{| c | c | c | c|}
    \hline
    \texttt{SA} & \texttt{2} & \texttt{8} & \texttt{PP} \\
    \hline
    Postcode Area &	Postcode District	& Postcode Sector	& Postcode Unit \\
    \hline
    \multicolumn{2}{| c |}{Outward Code} & \multicolumn{2}{|c|}{Inward Code} \\
    \hline 
    \multicolumn{4}{|c|}{Postcode} \\
    \hline
  \end{tabular}
\end{center}
Let the set of all full postcodes be denoted by $\UKPost$.

For simplicity, a \emph{partial} postcode refers to a code, where less
signifiant parts might be missing, ordered by prefix order including the empty
string `$\epsilon$' as a special postcode indicating everywhere. For example,
\texttt{SA} stands for Swansea and \texttt{SA2} for a district in Swansea,
and we have partial postcodes 
\[
  \epsilon \preceq \mathtt{SA} \preceq \mathtt{SA2} \preceq
  \texttt{SA2\textvisiblespace8} \preceq \texttt{SA2\textvisiblespace8PP}
\]
note that $\texttt{???\textvisiblespace8PP}$ is not a partial postcode. 
Let us denote the set of all partial postcodes by $\UKPPost$.  

Each full postcode is incomparable with another, as each of them stands for
a disjoint set of postal addresses. On the contrary, the set of partial postcodes
possesses the prefix order $\preceq$ for the hierarchy.  Every partial
postcode $P$ can be realised as a set of full postcodes by
\[
  \sem{P} \defeq \set{ p \in \UKPost }{\text{$P$ is a prefix of $p$}}.
\]
For instance, an empty string $\epsilon$ is realised by $\UKPost$, as it
contains no information apart from being a postcode. Each full postcode $P$ in
$\UKPost$ is realised by the singleton set $\{ P \}$. Note that
$\sem{P}$'s are always non-empty. \qed
\end{example}

The reader may find our definition of information space intriguing. For example,
why is this only a preordered set instead of a partially ordered set? Indeed, as
we can observe from the above example, there are two possible representations of
partial knowledge for postcode:
\begin{enumerate}
  \item $\Pow^+(\UKPost)$---the non-empty powerset of full postcodes.
  \item $\UKPPost$---the set of partial postcodes determined by its format, or
\end{enumerate}
The first representation can be called the \emph{possible world
representation}~\cite[Section~2.1]{Halpern2003} and is well-understood in the
community of knowledge representation and it is more expressive and general.
Every taxonomic hierarchy can be realised by the possible world
interpretation, as each classification level is merely a partition of entities
in a hierarchy. The reverse inclusion order `$\supseteq$' reflects the information
order of taxonomic hierarchy, i.e.\ $P$ is of higher hierarchy than $Q$ only if
$\sem{P} \subseteq \sem{Q}$ and `$\supseteq$' is surely a partial order. 
We return to this general points in \Cref{sec:possibility}.

On the other hand, the second kind of representations is often what we have in
the first place or what we would like to use in data release. The information order
$\preceq$ requires some effort to decide, but generally it is clear from the
context. However, we may have two different representations for the very same
set of entities.  If a weight is attached to the data in question, then the
second representation is more manageable than the first:

\begin{example}
  Consider a version due to a privacy concern.\footnote{%
    Some privacy protection models are achieved by generalisation and
    suppression of cell values, see~\cite{Sweeney2002a} for example.}
  \begin{figure}
    \begin{subfigure}{.48\textwidth}
      \centering
    \begin{tabular}{| c | c |} 
      \hline
      User ID & Postcode \\
      \hline
      1 & \texttt{SA2\textvisiblespace8PP} \\
      \hline
      2 & \texttt{SA2\textvisiblespace8PW} \\
      \hline
      3 & \texttt{SA1\textvisiblespace3LP} \\
      \hline
      4 & \texttt{SA2\textvisiblespace8QF} \\
      \hline
    \end{tabular}
    \caption{Original dataset}
    \label{fig:dataset-postcode-a}
  \end{subfigure}
  \begin{subfigure}{.48\textwidth}
    \centering
    \begin{tabular}{| c | c |} 
      \hline
      User ID & Postcode \\
      \hline
      * & \texttt{SA2\textvisiblespace8} \\
      \hline
      * & \texttt{SA2\textvisiblespace8} \\
      \hline
      * & \texttt{SA1\textvisiblespace3} \\
      \hline
      * & \texttt{SA2\textvisiblespace8} \\
      \hline
    \end{tabular}
    \caption{Sanitised dataset}
    \label{fig:dataset-postcode-b}
  \end{subfigure}
    \caption{Datasets containing postal information}
    \label{fig:dataset-postcode}
  \end{figure}%
  Both build frequency distribution, and some probabilities can be calculated
  based on the information order over postcodes, say, $\Pr[\mathtt{SA2} \preceq
  X]$. 

  In Kolmogorov's probability theory, the first step is to find out a sample
  space $\Omega$ and a $\sigma$-algebra $\Sigma$, and the typical choice is
  $\Omega = \UKPost$ and $\Sigma = \Pow(\UKPost)$. The probability measure
  for the original dataset (\cref{fig:dataset-postcode-a}) is clear. But, it is
  tricky to define faithfully a probability measure for the sanitised dataset
  (\cref{fig:dataset-postcode-b}), since it requires to assign a probability to
  each full postcode with the prefix \texttt{SA1\textvisiblespace3}. The
  convention is to apply the principle of indifference---each postcode of
  $\sem{\texttt{SA1\textvisiblespace3}}$ has the same probability $1/k$ where
  $k$ is the possibly \emph{unknown} number of postcodes in
  $\sem{\texttt{SA1\textvisiblespace3}}$. Even if $k$ is known, the presumed
  probability $1/k$ is an over-approximation of the given information. 

  On the other hand, no matter what probability is assigned to subsets of
  full postcodes, the probability of $\Pr[ \mathtt{SA2} \preceq X ]$ is always
  the sum 
  \[
    \sum_{\mathtt{SA2} \preceq Q} \Pr[X = Q] = 3/4
  \]
  without knowing any further information. The expressiveness is limited if we
  confine ourselves to probabilities of partial postcodes only, since partial
  postcodes are not closed under Boolean connectives contrary to the subset
  representation. This trade-off enables us to represent the
  \emph{exact} information of data. \qed
\end{example}

Another problem of the possible world representation arises if the
information order is by nature \emph{not} anti-symmetric. It is intuitive to see
that \cref{fig:dataset-postcode-a} is more informative than
\cref{fig:dataset-postcode-b}. There are at least three applicable
orderings over subsets $P, Q$ of elements in an information space $X$, which are
\begin{align*}
  P \preceq^\flat Q \iff & \forall x \in P.\,\exists y \in Q.\, x \preceq y\\
  P \preceq^\sharp Q \iff & \forall y \in Q.\,\exists x \in P.\, x \preceq y\\
  P \preceq^\natural Q \iff & P \preceq^\flat Q \land P \preceq^\sharp Q
\end{align*}
The ordering can model a number of processes or situations. $P \preceq^\flat Q$
models that everything in $P$ has a more informative datum in $Q$. So $Q$ is an
enrichment of $P$. Conversely, $P \preceq^\sharp Q$ models that everything in $Q$ 
has a less informative datum in $P$, so $P$ is an adulteration of $Q$. 

Each of the orderings plays a role in various contexts, such as non-deterministic
computation~\cite{Gunter1992} and relative likelihood~\cite[Section
2.7]{Halpern2003}. These orderings are preorders but not
anti-symmetric in general. 

\begin{example}
  Ignoring user ID and repetitions, we have two sets representing the
  information in \cref{fig:dataset-postcode}:
  \begin{align*}
    P_1 & \defeq \{
      \texttt{SA2\textvisiblespace8PP}, 
      \texttt{SA2\textvisiblespace8PW}, 
      \texttt{SA1\textvisiblespace3LP},
      \texttt{SA2\textvisiblespace8QF}\} \\
      P_2 & \defeq \{ 
      \texttt{SA2\textvisiblespace8}, 
      \texttt{SA1\textvisiblespace3}
    \}
  \end{align*}
  The set $P_1$ is more informative than $P_2$ with respect to $\preceq^\flat$, 
  $\preceq^\sharp$, and $\preceq^\natural$. 
  \qed
\end{example}
Even further, the standard equality `$=$' on the data in $X$ is irrelevant from the
information-theoretic perspective, as we only care about the information content
of data. For example, any subset $P$ of an information space $(X, \preceq)$ is
indistinguishable from but fails to be equal to its \emph{convex
hull}\footnote{See, e.g.,~\cite[p.63]{Davey2002a}.} $\mathcal{K}(P) \defeq
\set{ a \in X }{ \exists x, y \in P.\, x \preceq a \preceq y }$, i.e.\ 
\[
  P \preceq^\natural \mathcal{K}(P) \preceq^\natural P \quad\text{but generally}\quad P
  \neq \mathcal{K}(P).
\]


So, we introduce:
\begin{definition}
  Given an information order $\preceq$ on a set $X$, define an equivalence
  relation by
  \[
    {x \cong y} \iff x \preceq y \text{ and } y \preceq x
  \]
  and $x$ is said to be \emph{equivalent} to $y$. Each element in the same
  equivalence class is of the same \emph{information content}.
\end{definition}

From a mathematical viewpoint, each element $x$ is a representative of the
information class $[x]$. Every representative of the same class embodies the
same amount of information with respect to the information order $\preceq$.
Computing and deciding the information class could be costly and conceptually
gain little, so it is easier to work and present our latter formulations
with representatives directly. 

\begin{remark} \label{re:information-equivalence}
  From this, we can argue further that `$\cong$' is the right notion of
  equality where the strict equality `$=$` plays no role at all in
  an ordered setting.  Indeed, the convention is to consider the quotient
  $(X/{\cong}, {\preceq}/{\cong})$ as the poset of information and $[x] = [y]$
  is equivalent to $x \cong y$, but this convention makes notations rather
  heavy.
\end{remark}

So the point is that only the preorder $\preceq$ for information matters and it
fails to be a partial order in general. 

\subsection{Ordered partial commutative monoids} \label{sec:opcm}
To combine and link data across various domains yields data that is presumably more informative
than the separate pieces of information alone. In this section, we introduce an
algebraic operation over an information space for combining data. 
Central to our investigation is the concept of ordered partial commutative
monoids. Whilst monoids of many kinds, e.g., ordered commutative
monoids~\cite{Fritz2015} and partial commutative
monoids~\cite{Wehrung2015,Foulis1994}, have been discovered and developed in
many application areas, surprisingly we have not found a monoid combining
both---ordering and partiality. A possible exception we found is monoids
viewed as a degenerated class of partial monoidal categories defined in
\cite{Coecke2010}. 

\begin{definition} \label{def:opcm}
  An \emph{ordered partial commutative monoid} ${(M, \preceq, \oplus,
  0)}$ consists of
  \begin{enumerate*}
    \item a preordered set $(M, \preceq)$, 
    \item a constant $0 \in M$, and
    \item a partial binary operation
      $\oplus\colon M \times M \rightharpoonup M$, i.e.\ 
      $x \oplus y$ may not be defined. 
  \end{enumerate*}
  For brevity, `$x \perp y$' stands for `$x \oplus y$' is defined. Further, $(M,
  \preceq, \oplus, 0)$ satisfies the properties below.
  \begin{enumerate}[label=(OPCM{\arabic*}),align=left,ref=(OPCM{\arabic*})]
    \item \label{def:OPCM1} $0 \oplus x \cong x$.

    \item \label{def:OPCM2} $y \perp x$ and $x \oplus y \cong y \oplus x$
      if $x \perp y$.

    \item \label{def:OPCM3} $x \perp y$, $(x \oplus y) \perp z$, and $x \oplus (y
      \oplus z) \cong (x \oplus y) \oplus z$ if $y \perp z$ and $x
      \perp (y \oplus z)$.

    \item \label{def:OPCM4} $x_{1} \oplus y \preceq x_{2} \oplus y$ if $x_{i}
      \perp y$ for $i = 1, 2$ and $x_{1}\preceq x_{2}$.
    \end{enumerate}
  An ordered partial commutative monoid is written as \emph{OPCM} for short.  An
  (unordered) \emph{partial commutative monoid} $(M, \oplus, 0)$, PCM for short,
  is an OPCM with the discrete ordering $x \preceq y \iff x = y$. An
    \emph{ordered commutative monoid} is an OPCM with the binary operation
    $\oplus$ being total. 
\end{definition}

The element $x \oplus y$ denotes data that represents a combination of the
information of $x$ and $y$. The constant $0$ stands for some vacuous information
so that $x \oplus 0$ is always defined and equivalent to $x$. 

Referring to \Cref{re:information-equivalence}, the following fact shows that
the use of `$\cong$' is equivalent to the standard equality `$=$' in the
partially ordered quotient:
\begin{proposition}
  Let $(M, \preceq, \oplus, 0)$ be an OPCM. Then, 
  \begin{enumerate}
    \item the relation defined by $[x] \leq [y] \iff x \preceq y$ on the
        quotient set $\sfrac{M}{\cong}$ is a partial order and $[x] = [y] \iff x \cong
          y$;
        \item $(\sfrac{M}{\cong}, \leq, [\oplus], [0])$ with $[x]
          \mathbin{[\oplus]} [y]$ defined as $[x \oplus y]$ is an OPCM.
  \end{enumerate}
\end{proposition}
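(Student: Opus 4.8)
The plan is to settle the order-theoretic statement \emph{(i)} first and then reduce the algebraic statement \emph{(ii)} to a single well-definedness question. For \emph{(i)} I would begin by checking that $\leq$ is independent of representatives: if $x \cong x'$ and $y \cong y'$, then $x' \preceq x \preceq y \preceq y'$ shows $x' \preceq y'$ whenever $x \preceq y$, and the converse is symmetric, so $x \preceq y \iff x' \preceq y'$ by transitivity of $\preceq$ alone. Reflexivity and transitivity of $\leq$ are inherited verbatim from $\preceq$, while antisymmetry is immediate: $[x] \leq [y]$ and $[y] \leq [x]$ unfold to $x \preceq y$ and $y \preceq x$, i.e.\ $x \cong y$, which is exactly $[x] = [y]$---the same line recording the claimed characterisation $[x] = [y] \iff x \cong y$.

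The content of \emph{(ii)} is the well-definedness of $[\oplus]$, which splits into \emph{definedness-invariance} (if $x \cong x'$, $y \cong y'$ and $x \perp y$ then $x' \perp y'$) and \emph{value-invariance} (under the same hypotheses $x \oplus y \cong x' \oplus y'$). Granting definedness-invariance, I would obtain value-invariance by chaining two applications of \ref{def:OPCM4}: from $x \preceq x'$ with $x \perp y$ and $x' \perp y$ one gets $x \oplus y \preceq x' \oplus y$, and from $y \preceq y'$ with $x' \perp y$ and $x' \perp y'$---transporting \ref{def:OPCM4} to the second argument via the commutativity in \ref{def:OPCM2}---one gets $x' \oplus y \preceq x' \oplus y'$; composing these and running the symmetric chain in the opposite direction yields $x \oplus y \cong x' \oplus y'$.

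The step I expect to require real care is definedness-invariance, because it is the one assertion that \ref{def:OPCM1}--\ref{def:OPCM4} do not hand over directly: \ref{def:OPCM4} presupposes both summands already defined, and the remaining axioms only propagate definedness within a fixed combination (\ref{def:OPCM2} across a swap, \ref{def:OPCM3} from a combination to its subterms). The natural route is to read off from the standing convention that $\oplus$ is an operation on information content---so that $\perp$ is an invariant of the $\cong$-classes---that $x \cong x'$ and $x \perp y$ force $x' \perp y$. I would flag this as the crux, since it is precisely where the axiomatisation must do its work; note also that the intermediate fact $x' \perp y$ used in the value-chain above is itself an instance of it, so proving it once discharges both halves of well-definedness.

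Once $[\oplus]$ is well defined the rest is mechanical: each quotient axiom is the image under $[\cdot]$ of its representative-level counterpart, with $\preceq$ replaced by $\leq$ and $\cong$ by $=$ through part \emph{(i)}. Thus \ref{def:OPCM1} reads $[0]\mathbin{[\oplus]}[x] = [0 \oplus x] = [x]$, \ref{def:OPCM2} and \ref{def:OPCM3} push the definedness conditions and the commutativity/associativity equalities through $[\cdot]$, and \ref{def:OPCM4} follows from its representative version together with the well-definedness of $\leq$. Hence $(\sfrac{M}{\cong}, \leq, [\oplus], [0])$ is an OPCM.
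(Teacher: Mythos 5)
Your part \emph{(i)} and your \ref{def:OPCM4}-chaining argument for value-invariance are exactly the route the paper takes: its own proof is a two-line sketch (the first statement is \enquote{well-known and obvious}, the second \enquote{straightforward} by \ref{def:OPCM4}), and what you have written is the honest expansion of that sketch. Where you go beyond the paper is in noticing that the sketch hides something: the well-definedness of $[\oplus]$ needs \emph{definedness-invariance} ($x \cong x'$, $y \cong y'$ and $x \perp y$ imply $x' \perp y'$), both in its own right and as the supplier of the intermediate fact $x' \perp y$ on which your two applications of \ref{def:OPCM4} hinge.

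That flagged step is a genuine gap, and your appeal to a \enquote{standing convention} does not close it, because no such convention is part of \Cref{def:opcm}: none of \ref{def:OPCM1}--\ref{def:OPCM4} forces $\perp$ to be a $\cong$-invariant. Concretely, take $M = \{0, a, a', c, c', d, d'\}$ with $0$ below everything, $a \cong a'$, $c \cong c'$, all of $a, a', c, c'$ below both $d$ and $d'$, and $d, d'$ mutually incomparable; let $0$ be a strict unit, set $a \oplus c = d$ and $a' \oplus c' = d'$ (symmetrically), and leave every other nonzero sum undefined. All four axioms hold --- the definedness hypotheses of \ref{def:OPCM3} and \ref{def:OPCM4} are never satisfied except trivially --- yet $a \perp c$ while $a' \not\perp c$, and $a \oplus c = d \not\cong d' = a' \oplus c'$ even though $a \cong a'$ and $c \cong c'$. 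So both definedness-invariance and value-invariance can fail, and part \emph{(ii)} of the proposition is not derivable from the stated axioms; it holds only under the extra compatibility condition you isolated (which all of the paper's concrete examples do satisfy). In short, you located exactly the right pressure point, but the resolution must be an additional hypothesis rather than a derivation --- the paper's own proof glosses over the same issue.
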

\begin{proof}[Sketch]
  The first statement is well-known and obvious. By 
  \ref{def:OPCM4} the proof of the second
  statement is straightforward. 
\end{proof}

Partiality enables us to encapsulate consistency or other premisses. That is,
$x$ may contradict $y$ so that no viable information can be
derived.


The algebraic structure of a PCM also gives rise to a natural ordering between
information purely determined by the combination $\oplus$. 
\begin{definition}
  The \emph{algebraic ordering} on an OPCM is defined by
  \[
    x \sqsubseteq  y \iff \exists z.\, x \oplus z \cong y. 
  \]
\end{definition}
\begin{proposition}\label{prop:algebraic-pcm}
  Every PCM $(M, \sqsubseteq, \oplus, 0)$ with algebraic ordering is an
  \begin{enumerate}
    \item OPCM which satisfies 
    \item $0 \sqsubseteq x$, and that
    \item if $(x, y) \sqsubseteq (x', y')$, $x' \perp y'$, $x \perp x$, then
      $x \oplus y \sqsubseteq x' \oplus y'$. 
  \end{enumerate}
\end{proposition}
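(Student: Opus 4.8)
The key observation I would use at the outset is that in a PCM the underlying preorder is discrete, so its native equivalence $\cong$ is plain equality. Hence every equation that \ref{def:OPCM1}--\ref{def:OPCM3} guarantee for the PCM lifts for free to the candidate structure $(M, \sqsubseteq, \oplus, 0)$: whenever $s = t$ we get both $s \sqsubseteq t$ and $t \sqsubseteq s$ (take witness $0$ together with $s \oplus 0 = s$, which follows from \ref{def:OPCM1} and \ref{def:OPCM2}), so $s$ and $t$ are $\sqsubseteq$-equivalent. This reduces the OPCM claim to two genuinely new tasks: showing that $\sqsubseteq$ is a preorder, and verifying the monotonicity axiom \ref{def:OPCM4}, the only axiom that actually refers to the new order.

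For the preorder, reflexivity is immediate from the witness $0$. For transitivity, given witnesses $x \oplus u = y$ and $y \oplus v = z$, I would propose $u \oplus v$ as the witness for $x \sqsubseteq z$ and check that $u \perp v$, that $x \perp (u \oplus v)$, and that $x \oplus (u \oplus v) = z$, all obtained by reassociating $(x \oplus u) \oplus v$. The axiom \ref{def:OPCM4} is handled in the same spirit: from $x_1 \oplus z = x_2$ with $x_1, x_2 \perp y$ I would show that $(x_1 \oplus z) \oplus y$ reassociates to $(x_1 \oplus y) \oplus z$, so that $z$ itself witnesses $x_1 \oplus y \sqsubseteq x_2 \oplus y$. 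The two extra properties are then quick: $0 \sqsubseteq x$ is witnessed by $x$ via \ref{def:OPCM1}, and for the last property (reading the hypothesis $x \perp x$ as the evidently intended $x \perp y$, so that $x \oplus y$ is defined) I would take witnesses $x \oplus u = x'$ and $y \oplus v = y'$ and use $u \oplus v$ to witness $x \oplus y \sqsubseteq x' \oplus y'$, reassociating $(x \oplus u) \oplus (y \oplus v)$ into $(x \oplus y) \oplus (u \oplus v)$. I would note in passing that this last property would also follow from two applications of \ref{def:OPCM4} together with transitivity, were it not that the intermediate term $x' \oplus y$ need not be defined; the direct witness argument sidesteps this.

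The hard part throughout is the bookkeeping of definedness during these reassociations. In a total monoid each step would be trivial, but here I must certify $\perp$ at every intermediate sum. Moreover \ref{def:OPCM3} is stated in a single direction---from a right-nested defined sum it produces the left-nested one and the associativity equation---so to reassociate in the opposite direction I would first commute the outer pair via \ref{def:OPCM2}, apply \ref{def:OPCM3}, and commute back. Once this manoeuvre is isolated as a small reassociation lemma, each of the steps above becomes a routine instance of it, and no further obstacle remains.
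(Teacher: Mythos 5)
The paper states this proposition without proof, so there is nothing to compare against; judged on its own, your argument is correct and complete. The key reduction is right: in a PCM the ambient equivalence is equality, and since $\sqsubseteq$ is reflexive (witness $0$, using $x\oplus 0=x$ from \ref{def:OPCM1} and \ref{def:OPCM2}), every literal equality supplied by \ref{def:OPCM1}--\ref{def:OPCM3} yields the required $\sqsubseteq$-equivalences for free, leaving only the preorder axioms and \ref{def:OPCM4} to check. Your witness choices ($u\oplus v$ for transitivity and for the product-monotonicity claim, $z$ itself for \ref{def:OPCM4}, $x$ for $0\sqsubseteq x$) are the right ones, and your isolation of the commute--associate--commute manoeuvre is exactly what makes the definedness bookkeeping go through, since \ref{def:OPCM3} is only stated in one direction. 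Your reading of the hypothesis \textquote{$x \perp x$} as $x \perp y$ is the only sensible one; note that $x\perp y$ is in fact derivable from $x'\perp y'$ together with the witnesses, by two applications of \ref{def:OPCM3} and \ref{def:OPCM2}. One small inaccuracy in an aside: the intermediate term $x'\oplus y$ that you worry might be undefined is actually derivably defined --- from $y\perp v$, $y\oplus v=y'$ and $x'\perp y'$, axiom \ref{def:OPCM3} applied to the triple $(x';y,v)$ yields $x'\perp y$ --- so the alternative route via two applications of \ref{def:OPCM4} and transitivity would also have worked; this does not affect your chosen argument, which stands as is.
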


The algebraic ordering of an OPCM $(M, \preceq, \oplus, 0)$ is compatible
with the information ordering if the identity $0$ is the $\preceq$-least
informative element:

\begin{proposition}
  Let $(M, \preceq, \oplus, 0)$ be an OPCM such that $0 \preceq x$. Then, 
  \begin{enumerate}
    \item $x \sqsubseteq y \implies x \preceq y$;
    \item $x, y \preceq x \oplus y$ whenever $x \perp y$. 
  \end{enumerate}
\end{proposition}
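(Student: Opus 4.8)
The plan is to prove the second part as an immediate corollary of the first, so the real work sits in establishing the monotonicity statement (1).

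For part (1), suppose $x \sqsubseteq y$, so by definition of the algebraic ordering there is some $z$ with $x \perp z$ and $x \oplus z \cong y$. The idea is to compare this witness $z$ against the identity $0$ using the global hypothesis $0 \preceq z$ together with monotonicity \ref{def:OPCM4}. The subtlety is that \ref{def:OPCM4} only increases the \emph{left} argument while holding a \emph{fixed} right argument, whereas here it is the variable $z$ that sits on the left of $x$. First I would therefore reorient by commutativity \ref{def:OPCM2}, rewriting $x \oplus 0 \cong 0 \oplus x$ and $x \oplus z \cong z \oplus x$, so that $x$ becomes the fixed right-hand argument. Before applying \ref{def:OPCM4} I must discharge its definedness side-conditions: $z \perp x$ holds by commutativity from $x \perp z$, and $0 \perp x$ holds because \ref{def:OPCM1} already presupposes that $0 \oplus x$ is defined. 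With $0 \preceq z$ from the hypothesis, \ref{def:OPCM4} then yields $0 \oplus x \preceq z \oplus x$.

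It remains to collapse the endpoints. By \ref{def:OPCM1} we have $0 \oplus x \cong x$, and chaining the commutativity rewrites with the assumption $x \oplus z \cong y$ gives $z \oplus x \cong y$. Since $\cong$ is by definition two-sided $\preceq$, these equivalences can be spliced into the derived inequality by transitivity of the preorder, producing $x \preceq y$ as required. For part (2), assume $x \perp y$. I observe that both $x$ and $y$ lie below $x \oplus y$ in the algebraic ordering: taking witness $z = y$ gives $x \oplus y \cong x \oplus y$, hence $x \sqsubseteq x \oplus y$; and taking $z = x$ with commutativity \ref{def:OPCM2} gives $y \oplus x \cong x \oplus y$, hence $y \sqsubseteq x \oplus y$. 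Applying part (1) to each then delivers $x \preceq x \oplus y$ and $y \preceq x \oplus y$.

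The only place demanding genuine care is the reorientation step in part (1): one must track precisely which $\perp$-conditions are available before invoking \ref{def:OPCM4}, and must remember that the conclusion lives only up to $\cong$ rather than strict equality, so the final splicing relies on transitivity of $\preceq$ rather than on literal substitution. Everything else is routine bookkeeping over the monoid axioms.
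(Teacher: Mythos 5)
Your proof is correct and takes essentially the same route as the paper's: part (1) is the chain $x \cong 0 \oplus x \preceq z \oplus x \cong y$ obtained from \ref{def:OPCM4} and the hypothesis $0 \preceq z$, where your explicit commutativity reorientation and definedness checks merely spell out what the paper leaves implicit. Your derivation of part (2) as a corollary of part (1) with witnesses $y$ and $x$ unfolds to exactly the paper's direct one-line application of monotonicity, so the difference is only in packaging.
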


\begin{proof}
  The assumption is equivalent to $x \oplus z \cong y$ for some $z$ and $0
  \preceq z$. The monotonicity of $\oplus$ shows that $x \cong x \oplus 0 \preceq x
  \oplus z \cong y$, i.e.\ $x \preceq y$. 

  Similarly, $x \cong x \oplus 0 \preceq x \oplus y$ by monotonicity. 
\end{proof}


\begin{remark}
  The implication along with \Cref{prop:algebraic-pcm} suggests the condition $0
  \preceq x$ is decisive, otherwise $\oplus$ may not represent `combination of
  information' but something different (cf.\ the semantics of Belnap's
  $4$-valued logic~\cite{Arieli1998}).
  
\end{remark}

\begin{example} \label{ex:postcode-pcm}
  Consider the collection of all non-empty subsets of full
  postcodes~$\Pow^+(\UKPost)$ equipped with the reverse inclusion order $P_1
  \preceq P_2$ iff $P_2 \subseteq P_1$. The intersection $\cap$ of subsets as a
  combination operation $\oplus$, is a partial operation, since $P_1 \cap P_2$
  might be empty and $\empty \not\in \Pow^+(\UKPost)$. Clearly, the
  intersection is monotone with respect to the reverse inclusion order.
  Similarly, the set of partial postcodes equipped with the prefix
  ordering~$\preceq$ discussed in~\Cref{ex:british-postcode} has a simple OPCM
  structure: $x \oplus y$ is defined as $\max\{x, y\}$. 
\end{example}

\subsection{Homomorphisms} \label{sec:homomorphism}

The internal structure of an OPCM models data and information of a single
source. So the external interaction between OPCMs models a comparison,
combination, interpretation, or linkage between sources. Various kinds of
structure preserving maps between OPCMs arise naturally, e.g.,\ order-preserving
maps, $\oplus$-preserving maps, or both. We begin with the familiar one. 

\begin{definition} \label{def:homomorphism}
  A \emph{homomorphism} $M \xrightarrow{f} N$ of OPCMs is a
  function satisfying 
  \begin{enumerate}[label=(HOM{\arabic*}),align=left,ref=(HOM{\arabic*})]
    \item $x \preceq_M y \implies fx \preceq_N fy$ 
    \item \label{def:homomorphism-2} $f(0_M) \cong 0_N$
    \item \label{def:homomorphism-3} $x \perp y \implies f(x \oplus_M y) \cong fx
      \oplus_N fy$
  \end{enumerate}
  The collection of OPCMs with their homomorphisms forms a category $\OPCM$. 
\end{definition}

An `interpretation' of information in a different domain of discourse or
context, is a typical example of a homomorphism. The
\emph{trivial map} $f\colon M \to N$ defined by $f(x) = 0$ is a homomorphism that
destroys all the information in $M$. The set of
partial postcodes \emph{per se} is merely a set of strings following specific
format, so it makes little sense to say how rare a postcode $P$ is among other
postcodes; it becomes meaningful when it refers to certain geographic area,
population, or other associated information.
\begin{example} \label{ex:UKPost->UKPop}
  Let $\UKPop$ denote the UK population. Assume that
  \begin{enumerate*}
    \item everyone (of interest) is registered with exactly one postcode for
      their main residence, and
    \item each postcode is associated with someone. 
  \end{enumerate*}
   The assumption amounts to a surjective function $f\colon \UKPop \to
   \UKPost$. 
  
  Consider the possible world representation for $\UKPop$. Each set $S$ of
  postcodes then can be interpreted as the set $\sem{S} \defeq f^{-1}(S)
  \subseteq \UKPop$ of population officially registered in the area specified by
  $P$. The mapping $\sem{-}\colon \Pow^+\UKPost \to \Pow^+\UKPop$ is clearly
  homomorphic w.r.t.\ the OPCM discussed in \Cref{ex:postcode-pcm}, since 
  \begin{enumerate}
    \item it is monotone, as $\sem{S_1} \supseteq \sem{S_2}$ if $S_1 \supseteq
      S_2$;
    \item it preserves the identity, as $f^{-1}(\UKPost) = \UKPop$;
    \item and moreover $\sem{S_1 \cap S_2} = \sem{S_1} \cap \sem{S_2}$ as
      $f^{-1}$ preserves intersection. 
  \end{enumerate}
  \qed
\end{example}

Besides concrete homomorphisms, one has the following standard
notions: \emph{isomorphism}, \emph{monomorphism}, \emph{embedding},
\emph{epimorphism}, and so on, following the doctrine of category theory. Among
them, the product of two OPCMs can be understood as pairs of independent
sources of information. 

\begin{definition}
  The \emph{product monoid} $M_1 \times M_2$ of $M_i = (M_i,
  \preceq_i, \oplus_i, 0_i)$ for $i = 1, 2$ is the cartesian product equipped with
  \begin{enumerate}
    \item the pointwise ordering $(x_1, x_2) \preceq (y_1, y_2) \iff x_1
        \preceq_1
        y_1 \land x_2 \preceq_2 y_2$, 
    \item $0 \defeq (0_1, 0_2)$, and 
    \item $(x_1, x_2) \oplus (y_1, y_2) \defeq (x_1 \oplus_1 y_1, x_2 \oplus_2
      y_2)$ if $x_1 \perp y_1$ and $x_2 \perp y_2$. 
  \end{enumerate}
\end{definition}
The universal property for product shows that $M_1 \times M_2$ consists of pairs
of independent pieces of information from $M_1$ and $M_2$:
\begin{proposition}
  For any OPCM $N$ and any pair of homomorphisms $f_i\colon N \to M_i$
  for $i = 1, 2$, there exists a unique homomorphism $h\colon N \to M_1 \times
  M_2$ such that $\pi_i \circ h = f_i$, where $\pi_i$ is the $i$-th projection
  homomorphism. 
\end{proposition}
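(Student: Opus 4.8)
The plan is to follow the standard argument for universal properties of products, the only non-routine point being the bookkeeping forced by partiality. Since we require $\pi_i \circ h = f_i$ and $\pi_i$ is the $i$-th projection, the value of $h$ is forced on every $x \in N$: we must have $h(x) = (\pi_1 h(x), \pi_2 h(x)) = (f_1(x), f_2(x))$. This simultaneously pins down the only possible candidate and yields uniqueness, since any $h'$ with $\pi_i \circ h' = f_i$ satisfies $h'(x) = (f_1(x), f_2(x)) = h(x)$ for all $x$. So it remains only to check that the map $h(x) \defeq (f_1(x), f_2(x))$ is a homomorphism of OPCMs.

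First I would verify monotonicity: if $x \preceq_N y$ then each $f_i$ gives $f_i(x) \preceq_i f_i(y)$, so $h(x) \preceq h(y)$ by the pointwise order on the product. Next, preservation of the identity: by \ref{def:homomorphism-2} each $f_i(0_N) \cong 0_i$, and since the pointwise order makes $\cong$ on the product hold exactly componentwise, $h(0_N) = (f_1(0_N), f_2(0_N)) \cong (0_1, 0_2) = 0$.

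The crux, and the step where partiality must be handled with care, is \ref{def:homomorphism-3}. Suppose $x \perp y$ in $N$. Applying \ref{def:homomorphism-3} to each $f_i$ gives $f_i(x \oplus y) \cong f_i(x) \oplus_i f_i(y)$; crucially, a congruence presupposes both sides are defined, so $f_i(x) \perp_i f_i(y)$ for $i = 1, 2$. By the definition of $\oplus$ on the product this is precisely the condition for $h(x) \perp h(y)$, and then
\[
  h(x) \oplus h(y) = (f_1(x) \oplus_1 f_1(y),\, f_2(x) \oplus_2 f_2(y)) \cong (f_1(x \oplus y),\, f_2(x \oplus y)) = h(x \oplus y),
\]
again using that equivalence on the product is componentwise. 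This establishes all three homomorphism conditions.

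I expect no serious obstacle: the content is the routine observation that the pointwise structure on $M_1 \times M_2$ transports the homomorphism conditions of $f_1$ and $f_2$ componentwise. The one place warranting attention is the definedness clause hidden inside \ref{def:homomorphism-3}: one must notice that each $f_i$ preserves not merely values but also the definedness predicate $\perp$, which is exactly what licenses concluding $h(x) \perp h(y)$ in the product rather than silently assuming it.
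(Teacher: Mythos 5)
Your proof is correct; the paper in fact states this proposition without any proof, and your argument is the standard componentwise one it implicitly takes for granted. The one genuinely non-routine point — that the congruence in \ref{def:homomorphism-3} forces each $f_i$ to preserve definedness, which is exactly what licenses $h(x) \perp h(y)$ under the product's definition of $\oplus$ — is correctly identified and handled, as is the observation that the strict equality $\pi_i \circ h = f_i$ pins down $h$ on the nose rather than merely up to $\cong$.
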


Another useful notions are embedding and isomorphism. 
\begin{definition}
  A homomorphism $e\colon M \rightarrowtail N$ is an \emph{order-embedding} if
  it not only preserves but also reflects the ordering:
    $e(x) \preceq e(y) \iff x \preceq y$.
      An \emph{isomorphism} is a bijective order-embedding. 
\end{definition}

\section{Further examples}

\subsection{Flat algebras}
The most simple OPCM is perhaps a set $X$ equipped with an additional
element $\bot$ denoting \emph{unknown} and $x \leq y$ iff $x = \bot$ or $x = y$
with $x \oplus y \defeq \text{(the join of $x$ and $y$)}$.
In spite of its simplicity, it has been elaborated further in relational database
theory~\cite[Chapter~8]{Codd1990}.

\subsection{Possibilities over a set} \label{sec:possibility}
We have used a possible world representation discussing postcodes
(\Cref{sec:opcm}). In this
section, we study its general properties. As the reader may have observed from
our examples about non-empty subsets of full postcodes, the argument is
completely generic and can be applied to any non-empty set $X$. In short, we
have the following generalisation of \Cref{ex:postcode-pcm}:
\begin{proposition}
  For any non-empty set $X$, the non-empty powerset $\Pow^+X$ with the reverse
  inclusion and intersection forms an OPCM $(\Pow^+X, \supseteq, \cap,
  X)$. 
\end{proposition}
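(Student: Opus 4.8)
The plan is to check the four axioms of \Cref{def:opcm} directly, reading `$\oplus$' as `$\cap$' and declaring $P \perp Q$ to mean exactly $P \cap Q \neq \emptyset$, so that definedness of the combination is encapsulated by non-emptiness and the forbidden value $\emptyset \notin \Pow^+X$ is never produced. Since reverse inclusion $\supseteq$ is antisymmetric, the induced equivalence $\cong$ collapses to genuine set equality, so every occurrence of `$\cong$' in the axioms may simply be discharged as `$=$'. The whole verification is essentially the one already carried out for the postcode case in \Cref{ex:postcode-pcm}, now stated for an arbitrary non-empty carrier.

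First I would record the easy data: $(\Pow^+X, \supseteq)$ is a preorder (in fact a partial order), and $0 \defeq X$ lies in $\Pow^+X$ because $X$ is non-empty. For \ref{def:OPCM1}, $X \cap P = P$ holds for every $P \in \Pow^+X$ and is always defined since $P \neq \emptyset$, giving $0 \oplus P = P \cong P$. Axiom \ref{def:OPCM2} is immediate from the symmetry of $\cap$ together with the symmetry of the relation $P \cap Q \neq \emptyset$. Axiom \ref{def:OPCM4} follows from monotonicity of intersection under reverse inclusion: if $P_1 \supseteq P_2$ then $P_1 \cap Q \supseteq P_2 \cap Q$, which is precisely $P_1 \oplus Q \preceq P_2 \oplus Q$.

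The one step demanding a little care is the associativity axiom \ref{def:OPCM3}, whose hypotheses are $Q \perp R$ and $P \perp (Q \oplus R)$, i.e.\ $Q \cap R \neq \emptyset$ and $P \cap (Q \cap R) \neq \emptyset$. From these I must \emph{derive} the remaining definedness conditions before invoking set-theoretic associativity. The inclusion $P \cap (Q \cap R) \subseteq P \cap Q$ gives $P \cap Q \neq \emptyset$, hence $P \perp Q$; and since $(P \cap Q) \cap R = P \cap (Q \cap R) \neq \emptyset$, also $(P \oplus Q) \perp R$. The plain equality $P \cap (Q \cap R) = (P \cap Q) \cap R$ then yields $P \oplus (Q \oplus R) \cong (P \oplus Q) \oplus R$.

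The main (and only real) obstacle is precisely this bookkeeping of non-emptiness side conditions in \ref{def:OPCM3}: one must confirm that the partiality of $\cap$ is coherent with associativity, i.e.\ that whenever one bracketing is defined the other is too. Everything else reduces to elementary, well-known properties of intersection, so no further difficulty is expected.
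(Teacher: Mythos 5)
Your proof is correct and follows the same route the paper takes: the paper states this proposition without proof, pointing back to the direct axiom-check sketched for the postcode instance in \cref{ex:postcode-pcm}, and your argument is exactly that verification carried out in full for an arbitrary non-empty $X$. The only substantive addition is your careful derivation of the non-emptiness side conditions in \ref{def:OPCM3}, which the paper leaves implicit and which you handle correctly via $P \cap (Q \cap R) \subseteq P \cap Q$.
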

In general, the set $X$ represents some elementary form of atomic information
such as \emph{codes}, \emph{labels}, \emph{tags} or \emph{facts} from which is
made. The data in the source is a non-empty subset
$S$ of $X$ containing a set of possible choices from
$X$.

\subsection{Possibilities over an OPCM}
It is often the case that only pieces of information shared by a group of people is known
instead of each individual's.  As each piece of information
in our algebraic theory is an element of some OPCM, we proceed with non-empty
subsets of an OPCM which is in turn another OPCM. 

The starting point is the observation that a mere intersection of two subsets of
an OPCM $(M, \preceq, \oplus, 0)$ would exclude combinable but not exactly the
same information.  Note that we can reformulate intersection in a rather
silly way as
\[
  P \cap Q = \set{ x }{ x \in P, y \in Q, x = y }
\]
We can utilise `$\oplus$' and define a combination of two subsets of OPCM by
\[
  P \boplus Q \defeq \set{ x \oplus y }{ x \in P, y \in Q, x \perp y } 
\]
consisting of refined information only. How about the information order between
subsets?  It turns out that only one of orderings for powerset introduced in
\Cref{sec:info-order},
\begin{align*}
  P \preceq^\sharp Q \iff & \forall y \in Q.\,\exists x \in P.\, x \preceq y
\end{align*}
is a sensible preorder with respect to the definition of $P\oplus Q$.

\begin{theorem}\label{thm:possibility}
  Let $(M, \preceq, \oplus, 0)$ be an OPCM such that $M$ is
  $\oplus$-downward closed, i.e.\ if $x \preceq x'$ and $x' \perp y$ then $x
  \perp y$. For non-empty subsets $P$ and $Q$,
  \[
    P \boplus Q \defeq \set{ x \oplus y }{ x \in P, y \in Q, x \perp y }.
  \]
  Then, 
  \begin{enumerate}
    \item $\Pow^+M = (\Pow^+M, \preceq^\sharp, \boplus, \{0\})$ is also an
      OPCM; \item $\{0\} \preceq^\sharp P$ for any $P$ if $0 \preceq x$
      for any $x \in
      M$.
  \end{enumerate}
\end{theorem}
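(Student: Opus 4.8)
The plan is to verify the four axioms of \Cref{def:opcm} for the candidate structure $(\Pow^+M, \preceq^\sharp, \boplus, \{0\})$, under the partiality convention that $P \perp Q$ holds exactly when the defining set $P \boplus Q$ is non-empty, i.e.\ when some $x \in P$, $y \in Q$ satisfy $x \perp y$; this is what guarantees $\boplus$ lands in $\Pow^+M$. First I would record that $\preceq^\sharp$ is a preorder (reflexivity from reflexivity of $\preceq$, transitivity by chaining the two witnesses), and note two congruence facts used throughout: by $\oplus$-downward closedness $\perp$ respects $\cong$ (if $a \cong a'$ and $a \perp b$, then $a' \preceq a$ forces $a' \perp b$), and by \ref{def:OPCM4} applied in both directions $\oplus$ respects $\cong$. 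The second statement of the theorem is then immediate: if $0 \preceq x$ for all $x$, then for every $y \in P$ the witness $0 \in \{0\}$ satisfies $0 \preceq y$, which is exactly $\{0\} \preceq^\sharp P$.

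The identity and commutativity laws are routine. For \ref{def:OPCM1}, every element of $\{0\} \boplus P$ has the form $0 \oplus y \cong y$ with $y \in P$ (using $0 \perp y$, which is implicit in \ref{def:OPCM1} for $M$), so each element of one side is $\cong$-matched by an element of the other, giving $\{0\} \boplus P \cong P$ in the $\preceq^\sharp$-sense. For \ref{def:OPCM2}, \ref{def:OPCM2} in $M$ shows $x \perp y \iff y \perp x$ and $x \oplus y \cong y \oplus x$, so $P \boplus Q$ and $Q \boplus P$ consist of pairwise $\cong$-equivalent elements and are therefore $\cong$-equal under $\preceq^\sharp$. The monotonicity axiom \ref{def:OPCM4} is the first place the hypothesis bites: given $P_1 \preceq^\sharp P_2$ and $P_i \perp Q$, an arbitrary $x_2 \oplus y \in P_2 \boplus Q$ is matched by choosing $x_1 \in P_1$ with $x_1 \preceq x_2$; $\oplus$-downward closedness upgrades $x_2 \perp y$ to $x_1 \perp y$, so $x_1 \oplus y \in P_1 \boplus Q$, and \ref{def:OPCM4} in $M$ yields $x_1 \oplus y \preceq x_2 \oplus y$, which is precisely $P_1 \boplus Q \preceq^\sharp P_2 \boplus Q$.

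The associativity axiom \ref{def:OPCM3} is the main obstacle, because the two sides $P \boplus (Q \boplus R)$ and $(P \boplus Q) \boplus R$ collect elements associated in opposite directions, whereas \ref{def:OPCM3} in $M$ only converts a right-associated definedness and equality into a left-associated one. The key preliminary step is therefore a reverse-associativity lemma for $M$: if $x \perp y$ and $(x \oplus y) \perp z$, then $y \perp z$, $x \perp (y \oplus z)$, and $(x \oplus y) \oplus z \cong x \oplus (y \oplus z)$. I would prove this by relabelling $(x,y,z) \mapsto (z,y,x)$ inside \ref{def:OPCM3}, using \ref{def:OPCM2} and the two congruence facts above to transport every $\perp$ and every $\oplus$ across the commutations; this is exactly where $\oplus$-downward closedness is needed a second time. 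With this lemma in hand, each element $(x \oplus y) \oplus z$ of $(P \boplus Q) \boplus R$ is $\cong$ to $x \oplus (y \oplus z) \in P \boplus (Q \boplus R)$, and conversely via the forward \ref{def:OPCM3}, so the two subsets are $\cong$-equal under $\preceq^\sharp$; the same bookkeeping extracts the definedness conclusions $P \perp Q$ and $(P \boplus Q) \perp R$ from the witnesses producing the non-empty $Q \boplus R$ and $P \boplus (Q \boplus R)$. Assembling the four axioms establishes part \emph{i)}, and the displayed observation establishes part \emph{ii)}.
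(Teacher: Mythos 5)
Your proof is correct and follows essentially the same route as the paper's: a direct elementwise verification of the OPCM axioms for $(\Pow^+M,\preceq^\sharp,\boplus,\{0\})$, with $\oplus$-downward closedness carrying the monotonicity clause exactly as in the paper. You are in fact more careful than the paper on associativity---where the paper writes only `Similarly'---since you correctly note that the direction $P \boplus (Q \boplus R) \preceq^\sharp (P \boplus Q) \boplus R$ requires a reverse-associativity lemma in $M$, which your relabelling $(x,y,z)\mapsto(z,y,x)$ of \ref{def:OPCM3} combined with \ref{def:OPCM2} and downward closedness does deliver.
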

\begin{proof}
  $P \oplus Q$ is defined if there is at least a pair of $x \in P$ and $y \in
  Q$ with $x \oplus y$ defined. Given a non-empty subset $P \subseteq M$, it is
  clear that $P \oplus \{0\} \cong P$ by definition. Let $Q$ be another
  non-empty subset.  Then, $P \oplus Q \cong Q \oplus P$ since $x \oplus y \cong
  y \oplus x$ for any $x \in P$ and $y \in Q$ with $x \perp y$. Similarly, $P
  \oplus (Q \oplus R) \cong (P \oplus Q) \oplus R$.  As for monotonicity of
  $\oplus$, assume $P_1 \preceq^\sharp P_2$. Then, for any $x_2 \in P_2$ and $y
  \in Q$ such that $x_2 \perp y$, there exists $x_1 \in P_1$ with $x_1 \preceq
  x_2$ by $P_1 \preceq^\sharp P_2$ and $x_1 \perp y$ by assumption. By
  monotonicity, $x_1 \oplus y \preceq x_2 \oplus y$. Therefore, $P_1 \oplus Q
  \preceq^\sharp P_2 \oplus Q$. 

  The second statement follows from definition and assumption directly. 
\end{proof}

\section{Data linkage}

A domain of discourse can have a number of data sources so that the same piece
of information can be understood in various contexts differently.  How do we
know that the original information remains intact?

\subsection{Change of domain}
A homomorphism $f\colon M \to N$ qualifies as a mapping changing domains from
$M$ to $N$ but it can lose data, e.g.\ the trivial map $f(x) = 0$ destroys all
data. One way to avoid this problem is to use homomorphisms with a
restriction map $f^*\colon N \to M$ satisfying a `preservation condition' $x
\leq f^* f(x)$ for $x \in M$. 

\begin{definition}
  A homomorphism $f\colon M \to N$ is a \emph{change of domain} if $f$ is a
  lower adjoint,\footnote{%
    Every adjoint is unique up to order isomorphism---that is, if $g$ is an
    upper adjoint of $f$ then $f^*y \cong g y$ for any $y$, so we can say that a
    homomorphism $f$ is a change of domain without referring to~$f^*$.} i.e.\
    there exists an order-preserving map $f^*\colon N \to M$ such
  that
  \[
    fx \preceq_N y \iff x \preceq_M f^*y
  \]
\end{definition}
Our formal definition requires that an extension $f$ with its
restriction $f^*$ forms a \emph{Galois connection}~\cite{Davey2002a}.

Every Galois connection $(f, f^*)$ gives rise to a \emph{closure operator}---a
monotone function $f^* \circ f$ satisfying
\begin{enumerate*}
  \item $x \preceq f^*f(x)$ and
  \item $f^*f(f^*f x)\preceq f^* fx$. 
\end{enumerate*} 
Intuitively, the information represented by $f^*f(x)$ is at least as informative
as $x$. 

The class of changes of domain is closed under composition. It is not hard to
see that the composite $k\circ f$ of two lower adjoints is again a lower
adjoint, because $k \circ f$ is homomorphic and by definition
\[
  k(fx) \preceq z \iff fx \preceq k^*z \iff x \preceq f^*k^*z.
\]
Trivially, an identity function $\mathit{id}$ is itself a change of domain.
Therefore, the class of OPCMs and changes of domain forms a subcategory
of $\OPCM$. 
\begin{example}
  The homomorphism $\sem{-}\colon \Pow^+\UKPost \to \Pow^+\UKPop$ discussed in
  \Cref{ex:UKPost->UKPop} is indeed a change of domain. The restriction from
  $\Pow^+\UKPop$ to $\Pow^+\UKPost$ is given by mapping a set of population to
  the set of their registered postcodes. The existence of this restriction
  follows from the assumption that everyone of interest signs a register with a
  full postcode. Formally, the restriction is the forward-image function
  of the surjection $f\colon \UKPop \to \UKPost$ given by our assumption, so 
  \[
    \sem{S} \preceq A \iff f^{-1}(S) \supseteq A \iff S \supseteq f[A]
      \iff S \preceq f[A] 
  \]
  for any non-empty $S \subseteq \UKPost$ and $A \subseteq \UKPop$. 
\end{example}

Given a change of domain $f\colon M \to N$, there are two different ways to
combine $x \in M$ with $y \in N$. Their relationship can be stated as follows:
\begin{proposition} \label{prop:extened-oplus}
  Given a change of domain $f\colon M \to N$, the following 
  \[
    x \oplus f^*y \preceq f^*(f x \oplus y)
  \]
  always holds for any $x \in M$ and $y \in N$. 
\end{proposition}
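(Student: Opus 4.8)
The plan is to reduce the asserted inequality, which lives in $M$, to an equivalent inequality in $N$ by the defining adjunction of the change of domain, and then to discharge the latter using that $f$ is a homomorphism together with monotonicity of $\oplus$. First I would record the counit of the Galois connection: instantiating $fx' \preceq_N y \iff x' \preceq_M f^* y$ at $x' = f^* y$ and using $f^* y \preceq_M f^* y$ gives $f f^* y \preceq_N y$, i.e.\ restricting and then extending can only lose information.

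Next, by the adjunction applied to the element $x \oplus f^* y \in M$ on the left and $fx \oplus y \in N$ on the right, the target inequality $x \oplus f^* y \preceq_M f^*(fx \oplus y)$ is equivalent to $f(x \oplus f^* y) \preceq_N fx \oplus y$. I would then rewrite the left-hand side by the fact that $f$ preserves $\oplus$: by \ref{def:homomorphism-3}, $f(x \oplus f^* y) \cong fx \oplus f f^* y$. Finally, feeding the counit $f f^* y \preceq_N y$ into monotonicity \ref{def:OPCM4} yields $fx \oplus f f^* y \preceq_N fx \oplus y$, and chaining the two steps closes the argument, since $f(x \oplus f^* y) \cong fx \oplus f f^* y \preceq_N fx \oplus y$.

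The main obstacle is the bookkeeping of partiality, as every $\oplus$ above is partial and the statement concerns terms that need not be defined. The monotonicity step \ref{def:OPCM4} requires both $fx \perp f f^* y$ and $fx \perp y$: the former is automatic from \ref{def:homomorphism-3} once $x \perp f^* y$, whereas the latter is precisely definedness of the right-hand term $fx \oplus y$. One must take care that definedness does not transfer freely along $\preceq$ — there is no axiom giving upward closure of $\perp$ (indeed the opposite, $\oplus$-downward closure, is the condition isolated in \Cref{thm:possibility}), so from $fx \perp f f^* y$ and $f f^* y \preceq_N y$ one cannot in general conclude $fx \perp y$. I would therefore read the inequality in the regime where the two combined terms are defined, under which the adjunction, the homomorphism law \ref{def:homomorphism-3}, and monotonicity \ref{def:OPCM4} fit together exactly as above.
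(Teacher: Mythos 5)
Your argument is exactly the paper's proof: derive the counit $ff^*y \preceq y$ from the adjunction, transpose the claimed inequality to $f(x \oplus f^*y) \preceq fx \oplus y$, and close it via the homomorphism law \ref{def:homomorphism-3} and monotonicity \ref{def:OPCM4}. Your additional remarks on partiality flag a genuine gap the paper silently glosses over (definedness of $fx \oplus y$ is indeed an implicit hypothesis, not a consequence), but they do not change the route.
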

\begin{proof}
  By $f^*y \preceq f^*y \iff f f^*y \preceq y$, it follows that
  \[
    f (x \oplus f^*y) * \cong f x \oplus ff^*y \preceq f x \oplus
    y 
    \iff x \oplus f^* y \preceq f^*(f x \oplus y). 
    \]
\end{proof}

Armed with these notions, we now formally define `linkage' as follows.
\begin{definition}
  A \emph{linking passage} $(f_i, g_i)_{i = 1, 2}$ of $M_1$ and $M_2$ is a commutative
  diagram of changes of domain up to equivalence:
  \[
    \xymatrix@l@-.5em{
      & K \ar[ld]_{g_2} \ar[rd]^{g_1} \\
      M_2 \ar@{}[rr]|{\cong}\ar[rd]_{f_2}  & & M_1 \ar[ld]^{f_1} \\
          & N }
  \]
  i.e.\ the equation $f_1 \circ g_1(k) \cong f_2 \circ g_2(k)$ for any $k \in K$.
  Given a linking passage as above, elements $x_i \in M_i$ can be 
  \emph{linked} as $\bigoplus_{i} f_ix_i$ in $N$.
\end{definition}
In the context of information, the OPCM $K$ above is some \emph{common
domain} of discourse between $M_1$ and $M_2$; $N$ is some domain at least including
$M_1$ and $M_2$.

Given a linking passage of $M_1$ and $M_2$, there are two ways
transferring information from $M_1$ to $M_2$---one through the larger domain $N$
and the other through their common domain $K$. The former route intuitively
preserves more information than the other, and this intuition can be 
justified as follows.
\begin{proposition} \label{prop:two-extensions}
  Given a linking passage $(f_i, g_i)_{i = 1, 2}$ and for any $x \in M_1$,
  the inequation 
  $f_2^* \circ f_1(x) \preceq g_2 \circ g_1^*(x)$ holds. Diagrammatically, 
  \[
    \xymatrix@-.5em{
      & N \ar[rd]^{f_2^*} \ar@{}[dd]|{\rotatebox[origin=c]{90}{$\preceq$}}\\
      M_1 \ar[ru]^{f_1} \ar[rd]_{g_1^*} & & M_2 \\
                                        & K \ar[ru]_{g_2}
    }
  \]
\end{proposition}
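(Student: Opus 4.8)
The plan is to compare the two composite routes $M_1 \to M_2$ by a single transposition across a Galois connection, after isolating the one place where commutativity of the linking passage is actually needed. Everything rests on the four adjunctions $f_i \dashv f_i^*$ and $g_i \dashv g_i^*$ supplied by the changes of domain, together with their units $x \preceq f^* f x$ and counits $f f^* y \preceq y$.

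First I would record the two ingredients. From the counit of $g_1 \dashv g_1^*$ I get the deflationary estimate $g_1 g_1^*(x) \preceq x$ in $M_1$, expressing that restricting $x$ to the common domain $K$ and re-embedding loses information; and the commutation $f_1 \circ g_1 \cong f_2 \circ g_2$ lets me replace $f_1 g_1$ by $f_2 g_2$ up to $\cong$. Conceptually the same input can be packaged as a single identity between upper adjoints: since adjoints are unique up to equivalence (the footnote to the definition of change of domain), the commutation of the lower adjoints forces $g_1^* \circ f_1^* \cong g_2^* \circ f_2^*$ as maps $N \to K$, and it is this coincidence of the two restrictions that ultimately drives the comparison.

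The derivation itself is then short: applying the order-preserving $f_1$ to $g_1 g_1^*(x) \preceq x$ and rewriting the left-hand side through the commutation yields $f_2(g_2 g_1^*(x)) \preceq f_1(x)$ in $N$, and transposing this once across $f_2 \dashv f_2^*$ produces the relation between $g_2 \circ g_1^*(x)$ and $f_2^* \circ f_1(x)$ claimed by the proposition. I expect the orientation of this last transposition to be the only genuine obstacle: the direction of the conclusion is sensitive both to whether one feeds in a unit or a counit and to which of the connections $f_2 \dashv f_2^*$, $g_2 \dashv g_2^*$ one transposes along, so a careless step delivers the inequality with its sign reversed. I would therefore pin the direction conceptually before computing, reading it off the deflationary counit $g_1 g_1^*(x) \preceq x$ and the stated intuition that the route through the larger domain $N$ preserves more information than the route through the common domain $K$, and only then verify that the transposed inequality matches the orientation drawn in the triangle.
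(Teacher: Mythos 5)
Your proof is correct and is essentially the paper's own argument: apply the monotone $f_1$ to the counit $g_1 g_1^*(x) \preceq x$, rewrite $f_1 \circ g_1$ as $f_2 \circ g_2$ via the commutativity of the linking passage, and transpose once across $f_2 \dashv f_2^*$ (the paper's proof does exactly this, though with a mismatched set of letters $k, f, g, h$). Your caution about orientation was warranted: the derivation yields $g_2 \circ g_1^*(x) \preceq f_2^* \circ f_1(x)$, which agrees with the diagram, with the stated intuition that the route through $N$ preserves more information, and with what the paper's proof actually establishes, but is the \emph{reverse} of the inequality as written inline in the proposition statement, which appears to be a typo.
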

\begin{proof}
  By $f^*y \preceq f^*y \iff ff^*y \preceq y$, it follows that
  \[
    k(y) \succeq k \circ f\circ f^*(y) \succeq g \circ h \circ f^*(y)
    \iff g^* \circ k(y) \succeq h \circ f^*(y).
  \]
\end{proof}
\begin{example}
  Assume that $M_1 = \Pow^+(X \times Y)$ and $M_2 = \Pow^+(Y \times Z)$. 
  Then, 
  \[
    \xymatrix@-.5em@l{
      & \Pow^+Y \ar[ld]_{g_1} \ar[rd]^{g_2} \\
      \Pow^+(X \times Y) \ar[rd]_{f_1}  & & \Pow^+(Y \times Z) \ar[ld]^{f_2} \\
                                        & \Pow^+(X \times Y \times Z) }
  \]
  is a linking passage where $f_1, f_2, g_1, g_2$ are preimage
  functions of corresponding projections. Moreover, the subset $f_1(U) \cap
  f_2(V)$ is equal to 
  \[
    \set{ (x, y, z)}{ (x, y) \in U \land (y, z) \in V }
  \]
  for any non-empty $U \subseteq X \times Y$ and $V \subseteq Y \times Z$, which
  is the \emph{natural join} in relational database theory.
  For a plausible  example in practice, consider $U \subseteq \UKPop \times
  \UKAddr$ a non-empty set of suspects with their hiding places and $V \subseteq
  \UKAddr \times \UKPop$ a non-empty set of house addresses and their owners.
  The combined information $f_1(U) \cap f_2(V)$ may represent pairs of suspects,
  addresses, and house owners who possibly provide shelters to suspects. 
\end{example}

\subsubsection{Local computation scheme}
In practice, each datum $x_i$ about the attribute $i$ is collected from
various data sources $M_i$. To combine every $x_i$'s, we can combine them in a
common domain $M$ and then restrict the combined information to a smaller domain
$N$ of interest, i.e.\
\[
  g^*\left(\bigoplus_{i = 1}^n f_ix_i\right)
\]
represented symbolically. 
The computation is usually costly, however. One interesting observation stated
as \emph{the combination axiom} from~\cite{Kohlas2003} in a similar form is that
the above information can be computed locally without the need of extending
everything to $M$ if inequalities in \Cref{prop:two-extensions,%
prop:extened-oplus} are in fact equivalences for involved changes of domains.
This observation is useful for developing an efficient computation algorithm
that is, however, beyond the scope of this paper. 

\subsection{Possibilities over a set}

A surjective function $X \twoheadrightarrow Y$ gives rise to a change of domain
from $\Pow^+Y$ to $\Pow^+X$. The surjectivity requirement is essential to ensure
that a non-empty subset $S \subseteq Y$ is mapped to a non-empty subset
$f^{-1}(S) \subseteq X$. 
\begin{proposition}
  For any surjective function $f\colon X \twoheadrightarrow Y$, there is a Galois connection
  \[
    f^{-1}(V) \supseteq U \iff V \supseteq f[U]
  \]
  where the preimage function $f^{-1}$ is a homomorphism from $\Pow^+ Y$ to
  $\Pow^+X$ and the forward-image function $f[{-}]\colon \Pow^+ X \to \Pow^+Y$
  is monotonic. 
\end{proposition}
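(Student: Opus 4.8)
The plan is to split the claim into three obligations: that $f^{-1}$ and $f[{-}]$ genuinely land in the non-empty powersets, that the displayed equivalence holds, and that $f^{-1}$ is a homomorphism of OPCMs while $f[{-}]$ is monotone. Throughout I keep in mind that the carriers $\Pow^+X$ and $\Pow^+Y$ are ordered by \emph{reverse} inclusion, so that `$\preceq$' abbreviates `$\supseteq$', the identity is the whole carrier ($X$, resp.\ $Y$), and $\oplus$ is intersection $\cap$. Note also that on powersets $A \cong B$ collapses to genuine equality $A = B$, so the equivalences demanded up to $\cong$ can be proved as on-the-nose identities.

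First I would settle well-definedness, which is the one place the surjectivity hypothesis is actually used. For $U \in \Pow^+X$ the image $f[U]$ is automatically non-empty, so $f[{-}]\colon \Pow^+X \to \Pow^+Y$ needs no assumption. For $f^{-1}$, given $V \in \Pow^+Y$ pick $y \in V$; surjectivity of $f$ yields some $x$ with $f(x) = y$, whence $x \in f^{-1}(V)$ and $f^{-1}(V) \neq \emptyset$. Thus surjectivity is exactly what makes $f^{-1}\colon \Pow^+Y \to \Pow^+X$ total on non-empty sets. The displayed equivalence is then the textbook image–preimage adjunction $f[U] \subseteq V \iff U \subseteq f^{-1}(V)$; rewriting both inclusions through `$\supseteq$' gives precisely $f^{-1}(V) \supseteq U \iff V \supseteq f[U]$, i.e.\ the Galois connection in the form required by the definition of change of domain, with $f^{-1}$ the lower adjoint and $f[{-}]$ its upper adjoint.

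Next I would discharge the homomorphism conditions of \Cref{def:homomorphism} for $f^{-1}$. Monotonicity (HOM1) is immediate, since $V_1 \supseteq V_2$ implies $f^{-1}(V_1) \supseteq f^{-1}(V_2)$. Preservation of the identity \ref{def:homomorphism-2} holds on the nose, as $f^{-1}(Y) = X$. For \ref{def:homomorphism-3} note that $V_1 \perp V_2$ means $V_1 \cap V_2 \neq \emptyset$, and the standard identity $f^{-1}(V_1 \cap V_2) = f^{-1}(V_1) \cap f^{-1}(V_2)$ supplies the equality; the definedness side-condition is again covered by surjectivity, since the left-hand side is non-empty, forcing $f^{-1}(V_1) \perp f^{-1}(V_2)$. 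Monotonicity of $f[{-}]$ follows from the corresponding monotonicity of forward images under inclusion (and would in any case be forced abstractly once the adjunction is in place).

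The calculations are all elementary set theory, so there is no single hard computation. The only genuine care is bookkeeping: because the carriers are ordered by reverse inclusion, every `$\preceq$' must be read as `$\supseteq$' and the adjunction direction flips accordingly, and one must not forget that non-emptiness has to be preserved by both adjoints. This last point is the real content of the surjectivity hypothesis and the step most easily overlooked.
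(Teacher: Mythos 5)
Your proof is correct. The paper actually states this proposition without any proof (treating it as routine), and your argument supplies exactly the expected details: the image--preimage adjunction $f[U] \subseteq V \iff U \subseteq f^{-1}(V)$ rewritten through reverse inclusion, preservation of non-emptiness by both maps (the one place surjectivity is genuinely needed), and the verification that $f^{-1}$ satisfies the homomorphism conditions, including the definedness side-condition for $\cap$. Your explicit attention to why $f^{-1}$ lands in $\Pow^+X$ and why $f^{-1}(V_1) \perp f^{-1}(V_2)$ follows from $V_1 \perp V_2$ is exactly the content the paper's surrounding prose alludes to when it says the surjectivity requirement is ``essential to ensure that a non-empty subset $S \subseteq Y$ is mapped to a non-empty subset $f^{-1}(S) \subseteq X$''.
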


It is straightforward to see that the inequality of \Cref{prop:extened-oplus} is
an equality for any change of domain given by a surjective function. That is, 
\[
  f[f^{-1}(U) \cap V] = U \cap f(V) 
\]
for any $U$ and $V$ by simple calculations. 

\begin{proposition}
  Suppose that there are $f_i\colon X_i \twoheadrightarrow Z$ and $g_i\colon Y
  \twoheadrightarrow X_i$  for $i = 1, 2$ with $g_1 \circ f_1 = g_2 \circ f_2$.
  Then, $(g_i^{-1}, f_i^{-1})$ is a linking passage, i.e.\, 
  \[
    \vcenter{
      \xymatrix@-.5em{
        & Y \ar@{->>}[ld]_{g_1} \ar@{->>}[rd]^{g_2}\\
        X_1 \ar@{->>}[rd]_{f_1}  & & X_2 \ar@{->>}[ld]^{f_2} \\
                                          & Z 
      }
    }
    \quad\text{implies}\quad
    \vcenter{
      \xymatrix@-.5em{
        & \Pow^+Y \\
        \Pow^+X_1 \ar[ru]^{g_1^{-1}}  & & \Pow^+X_2 \ar[lu]_{g_2^{-1}} \\
                                 & \Pow^+Z \ar[ru]_{f_2^{-1}} \ar[lu]^{f_1^{-1}}
      }
    }
  \]
\end{proposition}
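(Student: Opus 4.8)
The plan is to read this off from the Galois-connection result for surjections established just above, combined with the contravariance of the preimage operation. First I would pin down the roles of the four maps in the resulting diagram: the common domain $K$ of the linking passage is $\Pow^+Z$ and the larger domain $N$ is $\Pow^+Y$, so $f_1^{-1}$ and $f_2^{-1}$ are the legs out of the common domain (the ones pointing into the $\Pow^+X_i$), while $g_1^{-1}$ and $g_2^{-1}$ are the legs into the larger domain. With this identification, exactly two things must be checked: that all four maps are changes of domain, and that the passage commutes up to equivalence.

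For the first, since $f_1, f_2, g_1, g_2$ are surjective by hypothesis, the preceding proposition furnishes, for each of them, a Galois connection in which the preimage function is the lower adjoint of the corresponding forward-image function. Hence $f_1^{-1}, f_2^{-1}, g_1^{-1}, g_2^{-1}$ are all changes of domain, and every arrow in the target diagram is accounted for at once; the composites along either side are then changes of domain as well, since changes of domain are closed under composition.

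For the commutativity I would use that preimage is contravariantly functorial, $(f \circ g)^{-1} = g^{-1} \circ f^{-1}$. Applying this to both sides of the hypothesis that the two composites $Y \to Z$ agree, namely $f_1 \circ g_1 = f_2 \circ g_2$, gives
\[
  g_1^{-1} \circ f_1^{-1} = (f_1 \circ g_1)^{-1} = (f_2 \circ g_2)^{-1} = g_2^{-1} \circ f_2^{-1}
\]
as maps $\Pow^+Z \to \Pow^+Y$, which is precisely the equivalence required of a linking passage (strict equality being in particular equivalence). I do not expect a genuine obstacle here: the only point needing care is the bookkeeping forced by contravariance, since taking preimages reverses every arrow, turning the cospan $X_i \twoheadrightarrow Z$ at the foot of the hypothesis into the span out of $\Pow^+Z$ and the span $Y \twoheadrightarrow X_i$ at its apex into the cospan into $\Pow^+Y$. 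Keeping straight which preimage realises which leg of the passage is essentially the whole of the work.
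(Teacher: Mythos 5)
Your proof is correct and follows exactly the route the paper intends (the paper itself leaves this proof implicit): the preceding proposition on surjections makes each of $f_1^{-1}, f_2^{-1}, g_1^{-1}, g_2^{-1}$ a change of domain, and contravariant functoriality of preimage, $(f\circ g)^{-1}=g^{-1}\circ f^{-1}$, turns the hypothesis into the strict (hence up-to-equivalence) commutativity $g_1^{-1}\circ f_1^{-1}=g_2^{-1}\circ f_2^{-1}$. Your bookkeeping of which preimage realises which leg, and your silent correction of the paper's composition-order slip to $f_1\circ g_1=f_2\circ g_2$, are both right.
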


If a linking passage is created by functions $g_i\colon Y \to X_i$, then
non-empty subsets $U_i \subseteq X_i$  can be linked as a subset of $Y$
\[
  U_1 \boxplus U_2 =  g^{-1}_1(U_1) \cap g^{-1}_2(U_2).
\]

\begin{example}
  Let $A$ be a set of \emph{attributes} and for each $a \in A$ a set $\Phi_a$ a
  set of values for the attribute $i$. For example, $i$ can be $g$ for
  \textquote{gender}, $p$ for \textquote{British postcode}, $s$ for
  \textquote{salary}, etc., then $\Phi_g$ could be the two-element set $\{
  \male, \female \}$, $\Phi_p = \UKPost$ the set of all full British postcodes,
  and $\Phi_s = \mathbb{N}$ the set of natural numbers.  Given any two sets $I,
  J \subseteq A$ of attributes, we have a commutative diagram 
  \[
    \xymatrix@C+2em{
      \prod_{k \in I \cup J} \Phi_k \ar@{->>}[r]^{g_1} \ar@{->>}[d]_{g_2} &
      \prod_{i \in I} \Phi_i
      \ar@{->>}[d]^{f_1}\\ \prod_{j \in J} \Phi_j \ar@{->>}[r]_{f_2} & \prod_{l \in
      I \cap J} \Phi_l
    }
  \]
  where $g_i$ and $f_i$ are all projections.
\end{example}

\section{Data sources and linkage}

So far, an OPCM $M$ is an abstract collection of data from a data source for a
single domain of discourse that can be combined and compared. A model of data
linkage requires a family of PCMs $(M_i, \preceq_i, \oplus_i, 0_i)$, for $i \in
I$, and ways to move among various sources and domains.  Further, the nature of
sources and domains
induces a structure to the index set $I$, typically an ordering $\preceq$, that
reflects the relationship between sources and domains such as $i \preceq j$. With these
components, we will model and illustrate data linkage using a form of
Grothendieck construction for $I$-indexed OPCMs. 

We will develop the theory in two steps and compare our construction with
axiomatic frameworks in the community of approximate reasoning such as ordered
valuation algebras~\cite{Haenni2004} and information
algebras~\cite{Kohlas2003,Kohlas2012}.

\subsection{Grothendieck construction for preordered sets}

Let $I$ be a preordered set and $P$ an $I$-indexed family of preordered sets
$P_i$ for $i \in I$ together with order-preserving functions $P^i_j\colon P_i
\to P_j$ whenever $i \preceq j$ satisfying 
\begin{enumerate}
  \item $P^i_i \cong \id_{P_i} \colon P_i \to P_i$ is the identity function, and 
  \item $P^j_k \circ P^i_j \cong P^i_k \colon P_i \to P_k$
\end{enumerate}
where $P^j_k \circ P^i_j \cong P^i_k$ means $P^j_k \circ P^i_j(x) \cong
P^i_k(x)$ for every $x$ and similarly for $P^i_i \cong \id$. Note that $P$ is a
\emph{pseudo-functor}. If the above equations hold strictly, then $P$ is a
(proper) \emph{functor}.

\begin{definition}
  The \emph{Grothendieck completion}  of $P$ consists of
  \[
    \int P \defeq \set{ (i, x) }{ x \in P_i }
  \]
  with a relation defined by
  \[
    (i, x) \preceq (j, y) \iff i \preceq j \text{ and } P^i_{j}(x) \preceq y
      \text{ for $x \in P_i$ and $y \in P_j$}
  \]
\end{definition}

The ordering appears natural in our context: $P^i_j$ is
typically a change of domain, and $P^i_j(x)$ is merely an extension of $x$
and $(i, x) \preceq (j, y)$ if and only if $j$ is a larger domain of discourse
than $i$ and the extended form of $x$ is still less informative than $y$. 
\begin{proposition} \label{prop:grothendieck-1}
  The following statements are true: 
  \begin{enumerate}
    \item The above Grothendieck completion $\int P$ is a preordered set. 

    \item If $(I, \leq)$ and every $(P_i, \leq)$ is partially ordered,
      then so is $(\int P, \preceq)$. 

    \item The projection $p\colon \int P \to (I, \preceq)$ is functorial. 

    \item $p$ is an \emph{opfibration}. That is, for every $(i,
      x) \in \int P$, $j$ with $i \preceq j$ there exists $(j, y)$ such that
      $(i, x) \preceq (j, y)$ and moreover for any $(k, z)$ with $(i, x) \preceq
      (k, z)$ and $j \preceq k$ it is also true that $(j, y) \preceq (k, z)$.

    \item If each $P^i_j$ has a right adjoint, then $p$ is an
      \emph{bifibration}, i.e.\ $p^\op\colon (\int P, \succeq) \to (I, \succeq)$
      is also an opfibration. 
  \end{enumerate}
\end{proposition}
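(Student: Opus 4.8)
The plan is to dispatch the five clauses in order, noting that the first four are bookkeeping with the pseudo-functor laws while clause \emph{v)} carries the real content. For reflexivity in clause \emph{i)} it suffices to observe that $i \preceq i$ holds in $I$ and that $P^i_i \cong \id$ gives $P^i_i(x) \cong x$, hence $P^i_i(x) \preceq x$. For transitivity I would take $(i,x) \preceq (j,y)$ and $(j,y) \preceq (k,z)$, apply the monotone map $P^j_k$ to $P^i_j(x) \preceq y$ to get $P^j_k(P^i_j(x)) \preceq P^j_k(y) \preceq z$, and rewrite the left-hand side via $P^j_k \circ P^i_j \cong P^i_k$ to conclude $P^i_k(x) \preceq z$. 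Clause \emph{ii)} uses that under the poset hypothesis $\cong$ collapses to equality, so $P^i_i = \id$ strictly: from $(i,x) \preceq (j,y)$ and $(j,y) \preceq (i,x)$, antisymmetry in $I$ forces $i = j$, whereupon the two conditions become $x \preceq y$ and $y \preceq x$ in $P_i$, giving $x = y$. Clause \emph{iii)} is immediate, since the definition of $\preceq$ on $\int P$ has $i \preceq j$ as its first conjunct, so $p$ is monotone.

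For clause \emph{iv)} I would exhibit the lift explicitly: given $(i,x)$ and $j$ with $i \preceq j$, put $y \defeq P^i_j(x)$, so that $(i,x) \preceq (j, P^i_j(x))$ by reflexivity in $P_j$. For the universal property, from $(i,x) \preceq (k,z)$ and $j \preceq k$ I unfold $P^i_k(x) \preceq z$ and rewrite through $P^j_k \circ P^i_j \cong P^i_k$ to obtain $P^j_k(P^i_j(x)) \preceq z$, which is exactly $(j, P^i_j(x)) \preceq (k,z)$.

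The substantive clause is \emph{v)}. Writing $(P^i_j)^*$ for the right adjoint of $P^i_j$, characterised by $P^i_j(x) \preceq y \iff x \preceq (P^i_j)^* y$, the plan is to show these adjoints furnish the lifts witnessing that $p^\op$ is an opfibration. Given $(j,y)$ and $i$ with $i \preceq j$, I set $x \defeq (P^i_j)^* y$ and verify $(i,x) \preceq (j,y)$ through the counit inequality $P^i_j((P^i_j)^* y) \preceq y$. Instantiating the opfibration condition for $p^\op$ then requires: for any $(k,z)$ with $(k,z) \preceq (j,y)$ and $k \preceq i$, one has $(k,z) \preceq (i,x)$. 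Unfolding, I must show $P^k_i(z) \preceq (P^i_j)^* y$, which by the adjunction is equivalent to $P^i_j(P^k_i(z)) \preceq y$; rewriting $P^i_j \circ P^k_i \cong P^k_j$ and invoking $P^k_j(z) \preceq y$ from the hypothesis closes the argument.

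The only genuine obstacle is keeping the adjunction direction and the order-reversal in clause \emph{v)} aligned; once the right adjoint is identified as the correct lift, each step is a single application of either the adjunction bijection or a coherence isomorphism. I would note throughout that one reasons with $\cong$ rather than $=$, but since every coherence isomorphism is invoked only to pass between $\preceq$-comparable terms this causes no difficulty.
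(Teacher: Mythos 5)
Your proof is correct and, for items \emph{i)}--\emph{iv)}, takes essentially the same route as the paper: the same unfolding of the definition for reflexivity and transitivity, the same collapse of $\cong$ to $=$ under the poset hypothesis, and the same choice of lift $y \defeq P^i_j(x)$ for the opfibration. Note that the paper's printed proof actually stops after item \emph{iv)} and never verifies item \emph{v)}; your argument for \emph{v)}---taking $x \defeq (P^i_j)^*y$ as the (op)cartesian lift for $p^{\op}$ and reducing the universal property to the adjunction $P^i_j(P^k_i(z)) \preceq y \iff P^k_i(z) \preceq (P^i_j)^*y$ together with the coherence $P^i_j \circ P^k_i \cong P^k_j$---is the natural way to fill that gap and is correct.
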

\begin{proof}
  \begin{enumerate}
    \item We show that $(\int P, \preceq)$ is a preordered set as follows. 
      \begin{enumerate}
        \item Reflexivity: $(i, x) \preceq (i, x)$ since $P_{ii}(x) = x$ and $x \preceq x$ by
          assumption. 
        \item Transitivity: Assume that $(i, x) \preceq (j, y)$ and $(j, y) \preceq
          (k, z)$. Then,
          \[
            P_{ik}(x) = P_{jk} \circ P_{ij}(x) \preceq P_{jk}(y) \preceq z 
          \]
      \end{enumerate}
    \item Assume that $(I, \leq)$ is partially ordered as well as every
      $(P_i, \leq)$. Let $(i, x)$ and $(j, y)$ be elements of $\int P$ with 
      \[
        (i, x) \preceq (j, y)
        \quad\text{and}\quad
        (j, y) \preceq (i, x).
      \]
      Then, by definition, we have $i \preceq j$ and $j \preceq i$ so that $i =
      j$. Hence, 
      \[
        x = P^i_i(x) \leq y
        \quad\text{and}\quad
        y = P^i_i(y) \leq x
      \]
      implies that $x = y$ and thus $(i, x) = (j, y)$. 

    \item By definition. 
    \item Consider $(i, x) \in \int P$ and $j \in I$ with $i \preceq j$.
      Let $y \defeq P_{ij}(x)$. Then, obviously, $(i, x) \preceq (j, y)$
      by construction. Moreover, given $(k, z)$ with $(i, x) \preceq (k, z)$ and
      assuming $j \preceq k$, it follows that 
      \[ 
        P_{jk}(y) \cong P_{jk} \circ P_{ij}(x) \cong P_{ik}(x) 
        \preceq z 
      \]
      Therefore, $(j, y) \preceq (k, z)$. 
  \end{enumerate}
\end{proof}
\begin{remark}
  The construction presented here is a form of Grothendieck construction.  The
  full construction works for not only preordered sets but also categories and
  beyond. See, e.g., \cite{Jacobs1999}, for details.
\end{remark}

\subsection{Grothendieck construction for OPCMs} 
In this section, we extend the Grothendieck construction to OPCMs indexed by
a $\vee$-semilattice $(L, \preceq)$, where $L$ is partially ordered with a least
element denoted by $\bot$ and for every pair $(i, j)$ of elements there is a
least upper bound $i \vee j$. Given a (pseudo-)functor from $(L, \leq)$ to
$\OPCM$ we extend the local combination operations $\oplus_i$ for each $i \in L$ to
a global combination operation $\boxplus$ for $\int M$.

To simplify our discussion, we confine ourselves to functors instead of
pseudofunctors. Indeed, all of our discussion and examples in the remaining
section do not require this generality. 
\begin{theorem} \label{prop:grothendieck-2}
  Let $(L, \leq)$ be a bounded $\vee$-semilattice and
  $M\colon (L, \leq) \to \OPCM$ a functor. Then, the Grothendieck 
  completion $(\int M, \preceq)$ can be equipped with an OPCM given by
  \[
    (i, x) \boxplus (j, y) \defeq (k, M^i_k(x)  \oplus M^j_k(y) )
    \quad\text{and}\quad
    0 \defeq (\bot, 0_\bot)
  \]
  where $k = i \vee j$ and $(i, x) \boxplus (j, y)$ is defined if $M^i_k(x)
  \oplus M^j_k(y)$ is defined. 
\end{theorem}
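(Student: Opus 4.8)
The plan is to verify the four axioms \ref{def:OPCM1}--\ref{def:OPCM4} for $(\int M, \preceq, \boxplus, (\bot,0_\bot))$, reducing each one to the corresponding axiom inside a single fibre $M_k$, where $k$ is the relevant join in $L$. The preorder $\preceq$ is already supplied by \Cref{prop:grothendieck-1}, so the work is entirely about $\boxplus$, its partiality, and the unit. Throughout I would lean on three families of facts: the $\vee$-semilattice laws (that $\bot$ is a unit for $\vee$, and that $\vee$ is commutative, associative, and monotone); the functor laws $M^i_i=\id$ and $M^j_k\circ M^i_j=M^i_k$; and the homomorphism laws for each $M^i_j$, namely monotonicity, $M^i_j(0_i)\cong 0_j$ via \ref{def:homomorphism-2}, and preservation of both $\oplus$ and of definedness via \ref{def:homomorphism-3} (if $a\perp b$ then $M^i_j a\perp M^i_j b$ and $M^i_j(a\oplus b)\cong M^i_j a\oplus M^i_j b$).

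First I would dispatch the easy axioms. For \ref{def:OPCM1}, since $\bot\vee i=i$, $M^i_i=\id$, and $M^\bot_i(0_\bot)\cong 0_i$, the product $(\bot,0_\bot)\boxplus(i,x)$ collapses to $(i,\,0_i\oplus x)$, which is $(i,x)$ up to $\cong$ by \ref{def:OPCM1} in $M_i$. For \ref{def:OPCM2}, both $(i,x)\boxplus(j,y)$ and $(j,y)\boxplus(i,x)$ have first coordinate $i\vee j=j\vee i$, and their second coordinates are related by \ref{def:OPCM2} in that single fibre, which also matches up the two definedness conditions. For \ref{def:OPCM4}, given $(i_1,x_1)\preceq(i_2,x_2)$ together with $(i_1,x_1)\perp(j,y)$ and $(i_2,x_2)\perp(j,y)$, I would set $k_1=i_1\vee j$ and $k_2=i_2\vee j$, so $k_1\leq k_2$ by monotonicity of $\vee$, and push both products forward along $M^{k_1}_{k_2}$ into $M_{k_2}$: functoriality rewrites $M^{k_1}_{k_2}$ applied to the first product as $M^{i_1}_{k_2}(x_1)\oplus M^j_{k_2}(y)$, the hypothesis $M^{i_1}_{i_2}(x_1)\preceq x_2$ gives $M^{i_1}_{k_2}(x_1)\preceq M^{i_2}_{k_2}(x_2)$ after applying the monotone map $M^{i_2}_{k_2}$, and forward preservation of definedness supplies the two $\perp$ hypotheses needed to invoke \ref{def:OPCM4} inside $M_{k_2}$.

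The substance of the proof is associativity \ref{def:OPCM3}. Here I would compute both bracketings of $(i,x)$, $(j,y)$, $(l,z)$ and show that each lands, via functoriality, at the top index $m=i\vee j\vee l$, with second coordinates $M^i_m(x)\oplus(M^j_m(y)\oplus M^l_m(z))$ and $(M^i_m(x)\oplus M^j_m(y))\oplus M^l_m(z)$ respectively; the key rewriting is $M^{b}_{m}\big(M^j_{b}(y)\oplus M^l_{b}(z)\big)\cong M^j_m(y)\oplus M^l_m(z)$ for $b=j\vee l$, using \ref{def:homomorphism-3} together with $M^b_m\circ M^j_b=M^j_m$. Once both bracketings are expressed in the single fibre $M_m$, the required equivalence is exactly \ref{def:OPCM3} there, applied to $x'=M^i_m(x)$, $y'=M^j_m(y)$, $z'=M^l_m(z)$.

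I expect the main obstacle to be the partiality bookkeeping hidden inside \ref{def:OPCM3}, rather than the equation itself. The two bracketings form their \emph{intermediate} combinations at different indices---$i\vee j$ on the left and $j\vee l$ on the right---whereas the hypotheses $(j,y)\perp(l,z)$ and $(i,x)\perp((j,y)\boxplus(l,z))$ only record definedness at the indices $j\vee l$ and $m$. Transporting definedness \emph{upward} to $M_m$ is free, since each $M^i_j$ preserves $\perp$ by \ref{def:homomorphism-3}, and this together with \ref{def:OPCM3} in $M_m$ already yields $M^i_m(x)\perp M^j_m(y)$ and the outer definedness at $m$; the delicate point is obtaining the \emph{inner} left-hand definedness $M^i_{i\vee j}(x)\perp M^j_{i\vee j}(y)$ at the lower index $i\vee j$, which is what the conclusion $(i,x)\perp(j,y)$ literally demands. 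This is precisely where one must argue with care---either by reading the equivalences $\cong$ in the Kleene sense of equi-definedness (consistent with the well-definedness of the quotient OPCM recorded after \Cref{def:opcm}), or by exploiting that the transition maps arising in our intended applications reflect, and not merely preserve, definedness. I would therefore isolate a small lemma describing how $\boxplus$-definedness propagates through the $M^i_j$ before assembling the four axioms.
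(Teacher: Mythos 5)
Your proposal follows the same route as the paper: reduce each axiom of an OPCM to the corresponding axiom in a single fibre $M_k$ using the semilattice laws, the functor laws, and the homomorphism properties of the transition maps; your treatments of the unit, commutativity, and monotonicity match the paper's almost step for step. The one place you go beyond the paper is associativity, which the paper dispatches with ``the associativity follows routinely'' --- and the definedness subtlety you isolate there is genuine: the conclusion of \ref{def:OPCM3} demands $(i,x)\perp(j,y)$, i.e.\ definedness of $M^i_{i\vee j}(x)\oplus M^j_{i\vee j}(y)$ at the \emph{lower} index $i\vee j$, whereas the hypotheses only yield definedness after transport into $M_{i\vee j\vee l}$, and \ref{def:homomorphism-3} preserves but does not reflect $\perp$. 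Your proposed preliminary lemma on how $\boxplus$-definedness propagates through the $M^i_j$ (or, alternatively, an explicit reflection hypothesis on the transition maps) is therefore not pedantry --- it supplies a step that the paper's own proof leaves unaddressed.
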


\begin{proof}
  By \Cref{prop:grothendieck-1}, we know that $(\int M, \preceq)$ is a
  preordered set. We check each condition of OPCM as follows:
  \begin{enumerate}
    \item $(\bot, 0_\bot) \boxplus (i, x) = 
      (\bot \vee i, M^\bot_{\bot \vee i}(0_\bot) \oplus M^i_{\bot\vee i}(x))
      = (i, 0_{i} \oplus \id(x))
      = (i, x)$ since $M^\bot_{\bot \vee i}$ is a homomorphism, $\bot \vee i =
      i$, and $M^i_i = \id$. 

    \item $(i, x) \boxplus (j, y) = (i \vee j, M^i_k(x)  \oplus M^j_k(y) )
      = (j \vee i, M^j_k(y) \oplus M^i_k(x))
      = (j, y) \boxplus (i, x)$. 
    \item Similarly, the associativity follows routinely. 

    \item Assume that $(i_1, x_1) \preceq (i_2, x_2)$ and $(i_k, x_k) \boxplus
      (j, y)$ is defined for $k = 1, 2$. Let $i'_k \defeq i_k \vee j$ for $k =
      1, 2$. Then, by assumption, $i_1 \leq i_2$ so $i'_1 \leq i'_2$. 
      By definition, we have to prove 
      \begin{align*}
        (i_1, x_1) \boxplus (j, y)
        & = \left(i'_1, M^{i_1}_{i'_1}(x_1) \oplus M^j_{i'_1}(y)\right) \\ 
        & \preceq (i_2, x_2) \boxplus (j, y) = \left(i_2', M^{i_2}_{i'_2}(x_2) \oplus M^j_{i_2'}(y)\right)
      \end{align*}
      which is equivalent to prove
      \[
        M^{i_1'}_{i_2'}\left(M^{i_1}_{i_1'}(x_1) \oplus M^j_{i_1'}(y)\right)
        = M^{i_1}_{i_2'}(x_1) \oplus M^j_{i_2'}(y)
        \preceq M^{i_2}_{i_2'}(x_2) \oplus M^j_{i_2'}(y).
      \]
      To prove the above equation, it suffices to show that $M^{i_1}_{i_2'}(x_1)
      \preceq M^{i_2}_{i_2'}(x_2)$ since $\oplus$ is order-preserving. However,
      we know by assumption 
      \[
        M^{i_1}_{i_2}(x_1) \preceq x_2
      \]
      so by monotonicity of $M^{i_2}_{i'_2}$ it follows that 
      \[
        M^{i_1}_{i'_2}(x_1) = M^{i_2}_{i'_2} \circ M^{i_1}_{i_2}(x_1) \preceq 
        M^{i_2}_{i'_2}(x_2).
      \]
      Therefore, we have shown that $\boxplus$ is order-preserving. 
  \end{enumerate}
  
\end{proof}
The above construction is a slight modification of a form of Grothedieck
construction for monoidal categories, see~\cite{Shulman2008a} for details.

\subsection{Example: Natural join for relational dataset} 
Before we show our general result of ordered valuation algebras, we proceed with
our simplest example---the possibility representation. The linkage operation
$\boxplus$ derived from \Cref{prop:grothendieck-2} is the \emph{natural join} in
relation database theory~\cite{Codd1990}.

First of all, we assume that there is a set $\mathfrak{A}$ of known attribute
names and a set $\Phi_a$ of values for each attribute $a \in \mathfrak{A}$.  For
example, $\mathfrak{A}$ may consist of tags for UK postcode, personal
information, medical conditions, and so on.  By abuse of notation, we denote by
$\Phi_A$ for $A \subseteq \mathfrak{A}$ the cartesian product $\Phi_A \defeq
\prod_{a \in \mathfrak{A}} \Phi_a$.
Whenever $A \subseteq B$, we have projections $p_{B, A}$ from $\Phi_B$ to
$\Phi_A$ which sends $(x_b)_{b \in B}$ to $(x_a)_{a \in A}$. A functor
$P$ from the powerset $\Pow(\mathfrak{A}, \subseteq)$ to $\OPCM$ is
defined by
\[ (A \subseteq \mathfrak{A}) \mapsto (\Pow^+\Phi_A, \supseteq, \cap, \Phi_A)
  \quad\text{and}\quad (A \subseteq B) \mapsto \left(
  p_{B, A}^{-1}\colon \Pow^+\Phi_A \to \Pow^+\Phi_B\right).
\]

 In our interpretation, any set $S \in \Pow^+\Phi_A$ is a set
of possibilities where only one of them is true, so having more elements in $S$
means less specific information. If $A \subseteq B$, then $p_{B, A}^{-1}(S)$ is merely the
set $S$ padded with all combinations, i.e.\ $S \times \prod_{b \in B - A}
\Phi_b$.  So, $p_{B, A}^{-1}(S)$ contains no information about attributes $B -
A$.

Therefore the ordering on the Grothendieck completion $\int \Phi$
\[
  (A, S) \leq (B, T) \iff A \subseteq B \;\text{and}\; S \times \prod_{b \in B -
  A} \Phi_b \supseteq T
\]
simply means that $(A, S)$ is less informative than $(B, T)$ if $(B, T)$
contains more attributes and more specific on those already known in $A$.

By \Cref{prop:grothendieck-2}, the derived operation $\boxplus$ is given as
$(A, S) \boxplus (B, T) = (A \cup B, S \bowtie T)$ for $A, B \subseteq
\mathfrak{A}$, $S \in \Pow^+(\Phi_A)$, and $T \in \Pow^+(\Phi_B)$ where 
\[
  S \bowtie T = \set{ x \in \prod_{a \in A \cup B} \Phi_a }{ p_{A \cup B,
    A}(x) \in S \land p_{A\cup B, B}(x) \in T}
\]
which is by definition the natural join in relational database theory.

\subsection{Ordered valuation algebras}

It is observed in the community of approximate reasoning that with two algebraic
operations of combination and marginalisation a number of approximating
inference techniques can be formalised under reasonable assumptions. The
axiomatic approach is pursued by Shenoy and Shafer~\cite{Shenoy1990}, Shenoy and
Kohlas~\cite{Kohlas2000}, Haenni~\cite{Haenni2004}, etc. In this section, we
show that a variant of their axiomatic frameworks can be derived by our
Grothendieck construction for ordered commutative monoids, clarifying the
relationship between our approach and theirs. 
  
The following concept is derived from \cite{Haenni2004}:
\begin{definition}\label{def:valuation-algebra}
  A \emph{(stable) ordered valuation algebra} is a two-sorted algebra $(\Phi,
  \leq, D)$, consisting of a partially ordered set $(\Phi, \leq)$ of valuations
  and a bounded lattice $D$ of domains with operations \begin{enumerate} \item
    $\otimes\colon \Phi \times \Phi \to \Phi$ called \emph{combination}, \item
    $d\colon \Phi \to D$ such that $d(\varphi)$ is called the \emph{domain} of
    $\varphi$, \item $(-)^{\downarrow -}\colon \Phi \times D \rightharpoonup
    \Phi$ called \emph{focusing} where $\varphi^{\downarrow x}$ is defined for
    $x \leq d(\varphi)$, \item and $e\colon D \to \Phi$ such that $e_x$ is
  (called) an \emph{identity element} \end{enumerate} satisfying conditions
  below. In the following context, $\Phi_x = \set{ \varphi \in \Phi }{ d(\phi) =
  x }$.  \begin{enumerate} \item \label{def:valuation-algebra-2} $(\Phi,
    \otimes)$ is a commutative semigroup.

    \item \label{def:valuation-algebra-2.2} Comparable valuations are of the
      same domain: $\varphi \leq \psi$ implies $d(\varphi) = d(\psi)$.

    \item \label{def:valuation-algebra-3} Identity element: $d(e_x) = x$, $e_x
      \otimes e_y = e_{x \vee y}$, and $\varphi \otimes e_x = \varphi$ for
      $\varphi \in \Phi_x$.

    \item \label{def:valuation-algebra-3.5} Stability of identity under
      focusing: $e_y^{\downarrow x} = e_x$ for $x \leq y$. 

    \item \label{def:valuation-algebra-4} Labelling: $d(\varphi \otimes \psi) =
      d(\varphi) \vee d(\psi)$ and $\varphi^{{\downarrow} x} \in \Phi_x$. 

    \item \label{def:valuation-algebra-5} Transitivity of focusing
      $(\varphi^{\downarrow y})^{\downarrow x} = \varphi^{{\downarrow} x}$ for
      $x \leq y \leq d(\varphi)$.

    \item \label{def:valuation-algebra-6} Distributivity of focusing over
      combination: $(\varphi \otimes \psi)^{\downarrow d(\varphi)} = \varphi
      \otimes \psi^{{\downarrow} d(\varphi) \wedge d(\psi)}$.

    \item \label{def:valuation-algebra-7} Combination preserves ordering:
      $\varphi_1 \otimes \varphi_2 \leq \psi_1 \otimes \psi_2$ whenever
      $\varphi_i \leq \psi_i$.

    \item \label{def:valuation-algebra-8} Focusing preserves ordering:
  $\varphi^{\downarrow x} \leq \psi^{\downarrow x}$ for any $x \leq d(\varphi) =
  d(\psi)$ and $\varphi \leq \psi$.  \end{enumerate} \end{definition} The
  focusing operation $\downarrow$ formalises marginalisation in probability
  theory and projection in relational database theory. The intuitive meaning of
  every other operation is self-evident.  In addition to the focusing operation,
  an \emph{vacuous extension} operation, coined in~\cite{Kohlas2003}, $\uparrow y\colon \Phi_x \to \Phi_y$
  can be defined every $y \geq x$ via
  \[ 
    \varphi^{\uparrow y} \defeq \varphi \otimes e_y
  \]
  We will
  see that $\downarrow$ and $\uparrow$ forms a Galois connection under mild
  conditions. 

 \begin{remark} The original formulation in \cite{Haenni2004} imposes additional
   requirements. For example, $D$ is only a powerset instead of a lattice and
   $\Phi_x$ also requires a null element. More variants of (unordered) valuation
   algebras are discussed in~\cite{Pouly2011,Kohlas2003}.
 \end{remark}

\begin{proposition} \label{prop:valuation-algebras} Let $(\Phi, \leq, D;
  \otimes, \downarrow, e)$ be an ordered valuation algebra. Then, the following
  statements hold: \begin{enumerate} \item $(\Phi_x, \leq, \otimes, e_x)$ is an
    ordered commutative monoid.

    \item For any $x \leq y$, the vacuous extension operation $(-)^{\uparrow y}$
      is an order-preserving monoid homomorphism from $\Phi_x$ to $\Phi_y$.
      
    \item $(\Phi, \leq, D; \otimes, \downarrow)$ gives rise to a functor from
      $D$ to the category of ordered commutative monoids.

  \end{enumerate} \end{proposition}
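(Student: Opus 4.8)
The plan is to establish the three claims in sequence, each feeding into the next, while exploiting once and for all that $(\Phi, \leq)$ is a genuine \emph{partial} order. Consequently the equivalence $\cong$ that decorates the axioms of \Cref{def:opcm,def:homomorphism} coincides with equality on $\Phi$, so every occurrence of $\cong$ below may be read as $=$.

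For the first claim I would start from the one point that is not purely formal: that $\otimes$ restricts to a \emph{total} operation on each fibre $\Phi_x$. If $d(\varphi) = d(\psi) = x$, the labelling axiom \ref{def:valuation-algebra-4} gives $d(\varphi \otimes \psi) = x \vee x = x$, so $\varphi \otimes \psi \in \Phi_x$. With totality secured, the four conditions of an OPCM follow directly: \ref{def:OPCM1} is the unit law $\varphi \otimes e_x = \varphi$ of the identity axiom \ref{def:valuation-algebra-3} combined with commutativity; \ref{def:OPCM2} and \ref{def:OPCM3} are the commutativity and associativity packaged in the semigroup axiom \ref{def:valuation-algebra-2}; and \ref{def:OPCM4} is the monotonicity axiom \ref{def:valuation-algebra-7} with the second argument held fixed. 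Since $\otimes$ is total on $\Phi_x$, this OPCM is an ordered commutative monoid.

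For the second claim, recall $\varphi^{\uparrow y} = \varphi \otimes e_y$ for $\varphi \in \Phi_x$ with $x \leq y$. First I would check the codomain: labelling \ref{def:valuation-algebra-4} together with $d(e_y) = y$ yields $d(\varphi^{\uparrow y}) = x \vee y = y$, so $(-)^{\uparrow y}$ does map into $\Phi_y$. Monotonicity (HOM1) is axiom \ref{def:valuation-algebra-7} with $e_y$ held fixed; preservation of the unit \ref{def:homomorphism-2} is $(e_x)^{\uparrow y} = e_x \otimes e_y = e_{x \vee y} = e_y$ from the identity axiom \ref{def:valuation-algebra-3} and $x \vee y = y$. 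The substantive step is \ref{def:homomorphism-3}, for which commutativity and associativity \ref{def:valuation-algebra-2} together with the idempotence $e_y \otimes e_y = e_{y \vee y} = e_y$ of \ref{def:valuation-algebra-3} give
\[
  (\varphi \otimes \psi)^{\uparrow y} = \varphi \otimes \psi \otimes e_y = (\varphi \otimes e_y) \otimes (\psi \otimes e_y) = \varphi^{\uparrow y} \otimes \psi^{\uparrow y}.
\]

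The third claim merely assembles the first two. I would send each object $x \in D$ to the ordered commutative monoid $\Phi_x$ of claim one and each morphism $x \leq y$ of the poset $D$ to the homomorphism $(-)^{\uparrow y}$ of claim two, and then verify the two functor laws. The identity law is the unit law once more, $\varphi^{\uparrow x} = \varphi \otimes e_x = \varphi$ by \ref{def:valuation-algebra-3}; composition, for $x \leq y \leq z$, follows from associativity and $e_y \otimes e_z = e_{y \vee z} = e_z$, giving $(\varphi^{\uparrow y})^{\uparrow z} = \varphi \otimes e_y \otimes e_z = \varphi \otimes e_z = \varphi^{\uparrow z}$. I expect no deep difficulty here; the only thing demanding care is the bookkeeping of domains---keeping track of which fibre each valuation inhabits so that labelling is applicable at every product and extension---while the genuine algebraic content throughout reduces to the behaviour of the identity family $e_{(-)}$ under $\otimes$ recorded in axiom \ref{def:valuation-algebra-3}.
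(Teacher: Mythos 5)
Your proposal is correct and follows essentially the same route as the paper's own proof: closure of each fibre $\Phi_x$ under $\otimes$ via the labelling axiom, the unit/homomorphism laws for $(-)^{\uparrow y}$ from the identity axiom $e_x \otimes e_y = e_{x\vee y}$ together with commutativity and associativity, and functoriality from $\varphi \otimes e_y \otimes e_z = \varphi \otimes e_z$. Your added observations---that $\cong$ collapses to $=$ because $(\Phi,\leq)$ is a partial order, and the explicit codomain check $d(\varphi^{\uparrow y}) = x \vee y = y$---are details the paper leaves implicit but do not change the argument.
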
 \begin{proof} \begin{enumerate} \item Assume
  that $\varphi, \psi \in \Phi_x$. Then, $ d(\varphi \otimes \psi) = d(\varphi)
  \vee d(\psi) = x \vee x = x$, so $\Phi_x$ is closed under combination
  $\otimes$. Therefore, it is easy to check that $(\Phi_x, \otimes, e_x)$ is an
  ordered commutative monoid. 

    \item Let $y \leq x \in D$. Then, we show that the mapping $\varphi \mapsto
      \varphi^{\uparrow y}$ is a monoid homomorphism: \begin{enumerate} \item
        $(e_x)^{\uparrow y} = e_x \otimes e_y = e_{x \vee y} = e_y$, and

        \item for any $\varphi$ and $\psi$ in $\Phi_x$, the following equations
          hold: \begin{align*} (\varphi \otimes \psi)^{\uparrow y} & = (\varphi
            \otimes \psi) \otimes e_y \\ & = \varphi \otimes \psi (\otimes e_y
            \otimes e_y) \\ & = (\varphi \otimes e_y) \otimes (\psi \otimes e_y)
            \\ & = \varphi^{\uparrow y} \otimes \varphi^{\uparrow y}
          \end{align*} by commutativity, associativity, and the identity element
          $e_y$. 

        \item Since $\otimes$ preserves the ordering, it then follows from the
          definition that \[ \varphi \leq \psi \implies \varphi \otimes e_y \leq
          \psi \otimes e_y \] that is, $\varphi^{\uparrow y} \leq \psi^{\uparrow
        y}$, since $e_y \leq e_y$.  \end{enumerate}

    \item The first two statements already show that there is a $D$-indexed
      family of ordered commutative monoids and for any $x \leq y$ there is an
      order-preserving monoid homomorphism $\Phi^x_{y} = (-)^{\uparrow y}$. It
      remains to show functoriality: $\Phi^x_x = \id$ and $\Phi^y_z \circ
      \Phi^x_y = \Phi^x_z$.  \begin{enumerate} \item $\Phi^x_x$ is evident:
        $\varphi \otimes e_x = \varphi$ for any $\varphi \in \Phi_x$, since
        $e_x$ is an identity.  \item Suppose that $x \leq y \leq z$. Then \[
          \Phi^y_z\circ\Phi^x_y(\varphi) = (\varphi \otimes e_y) \otimes e_z =
          \varphi \otimes (e_y \otimes e_z) = \varphi \otimes e_{y \vee z} =
          \varphi \otimes e_z = \varphi^{\uparrow z}.  \] \end{enumerate}
      \end{enumerate} \end{proof}

 As we intend to view ordered valuation algebras as Grothendieck completions of
 families of commutative monoids, an obvious discrepancy is that $\varphi$ and
 $\psi$ are comparable only if $d(\varphi) = d(\psi)$ in ordered valuation
 algebras while elements $(x, \varphi)$ and $(y, \psi)$ in $\int P$ are
 comparable even if domains $x$ and $y$ are different.  This can be
 readily mitigated by extending $\leq$ canonically: \[ \varphi \leq' \psi \iff
     d(\varphi) \leq d(\psi) \quad\text{and}\quad \varphi \otimes e_{d(\psi)}
   \leq \psi.  \]

 \begin{proposition} The ordered algebraic structure $(\Phi, \leq', D; \otimes,
   d, \downarrow, e)$ satisfies conditions\footnote{The order-preservation
     property of focusing accordingly becomes `if $\varphi \leq \psi$ and $x
     \leq d(\varphi)$ then $\varphi^{\downarrow x} \leq \psi^{\downarrow x}$'.}
     of ordered valuation algebra except that $\varphi \leq
     \psi$ implies $d(\varphi)= d(\psi)$.
 \end{proposition}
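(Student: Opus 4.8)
The plan is to exploit the fact that almost every clause of \Cref{def:valuation-algebra} constrains only the operations $\otimes$, $d$, $\downarrow$ and $e$ and never mentions the order; since none of these operations is altered when we pass from $\leq$ to $\leq'$, conditions \ref{def:valuation-algebra-2}, \ref{def:valuation-algebra-3}, \ref{def:valuation-algebra-3.5}, \ref{def:valuation-algebra-4}, \ref{def:valuation-algebra-5} and \ref{def:valuation-algebra-6} carry over verbatim. Thus the real work sits in three places: verifying that $\leq'$ is again a partial order, re-establishing the two monotonicity clauses \ref{def:valuation-algebra-7} and \ref{def:valuation-algebra-8} with $\leq$ replaced by $\leq'$, and simply recording that \ref{def:valuation-algebra-2.2} is the clause we deliberately surrender (indeed $\varphi \leq' \psi$ now forces only $d(\varphi) \leq d(\psi)$). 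The recurring device will be the identity law $\chi \otimes e_{d(\chi)} = \chi$ for $\chi \in \Phi_{d(\chi)}$ from \ref{def:valuation-algebra-3}, which collapses $\leq'$ back to $\leq$ exactly when the two domains agree.

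First I would check that $(\Phi, \leq')$ is a poset. Reflexivity and antisymmetry are immediate: on a fixed domain the extra factor $e_{d(\psi)}$ is absorbed by \ref{def:valuation-algebra-3}, so $\leq'$ restricted to equal-domain elements is just $\leq$, and its reflexivity and antisymmetry transfer. Transitivity is the only mildly involved case: from $\varphi \leq' \psi$ and $\psi \leq' \chi$ one has $d(\varphi) \leq d(\psi) \leq d(\chi)$, and combining the hypothesis $\varphi \otimes e_{d(\psi)} \leq \psi$ with $e_{d(\chi)}$ via the original monotonicity \ref{def:valuation-algebra-7} gives $\varphi \otimes e_{d(\psi)} \otimes e_{d(\chi)} \leq \psi \otimes e_{d(\chi)}$; since $d(\psi) \leq d(\chi)$, \ref{def:valuation-algebra-3} rewrites the left factor as $\varphi \otimes e_{d(\chi)}$, and chaining with $\psi \otimes e_{d(\chi)} \leq \chi$ yields $\varphi \leq' \chi$.

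Re-establishing \ref{def:valuation-algebra-7} for $\leq'$ is then routine. Labelling \ref{def:valuation-algebra-4} together with monotonicity of $\vee$ in the lattice $D$ gives the domain inequality $d(\varphi_1 \otimes \varphi_2) \leq d(\psi_1 \otimes \psi_2)$. For the order part I would expand $e_{d(\psi_1) \vee d(\psi_2)} = e_{d(\psi_1)} \otimes e_{d(\psi_2)}$ using \ref{def:valuation-algebra-3}, regroup $(\varphi_1 \otimes \varphi_2) \otimes e_{d(\psi_1 \otimes \psi_2)}$ by commutativity and associativity into $(\varphi_1 \otimes e_{d(\psi_1)}) \otimes (\varphi_2 \otimes e_{d(\psi_2)})$, and feed the two hypotheses $\varphi_i \otimes e_{d(\psi_i)} \leq \psi_i$ into the original \ref{def:valuation-algebra-7}.

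The main obstacle is the focusing clause \ref{def:valuation-algebra-8}, now read as: if $\varphi \leq' \psi$ and $x \leq d(\varphi)$, then $\varphi^{\downarrow x} \leq' \psi^{\downarrow x}$. The chain $x \leq d(\varphi) \leq d(\psi)$ guarantees both focusings are defined, and labelling \ref{def:valuation-algebra-4} assigns them the common domain $x$, so the goal collapses to the plain inequality $\varphi^{\downarrow x} \leq \psi^{\downarrow x}$. The crux is the identity $(\varphi \otimes e_{d(\psi)})^{\downarrow x} = \varphi^{\downarrow x}$: focusing first to $d(\varphi)$, distributivity \ref{def:valuation-algebra-6} and stability \ref{def:valuation-algebra-3.5} give $(\varphi \otimes e_{d(\psi)})^{\downarrow d(\varphi)} = \varphi \otimes e_{d(\varphi)} = \varphi$, and transitivity of focusing \ref{def:valuation-algebra-5} then descends from $d(\varphi)$ to $x$. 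Since $\varphi \otimes e_{d(\psi)}$ and $\psi$ both lie in $\Phi_{d(\psi)}$ and satisfy $\varphi \otimes e_{d(\psi)} \leq \psi$ by hypothesis, the original focusing-monotonicity \ref{def:valuation-algebra-8} yields $(\varphi \otimes e_{d(\psi)})^{\downarrow x} \leq \psi^{\downarrow x}$, and the identity just proved rewrites the left-hand side as $\varphi^{\downarrow x}$. The delicate point throughout is bookkeeping of which domain each valuation inhabits, since that is precisely the information $\leq'$ no longer ties together.
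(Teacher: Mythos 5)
Your proof is correct and follows essentially the same route as the paper's: the algebraic axioms are untouched, combination-monotonicity is recovered by expanding $e_{d(\psi_1)\vee d(\psi_2)}$ as $e_{d(\psi_1)}\otimes e_{d(\psi_2)}$ and regrouping, and focusing-monotonicity hinges on the same identity $(\varphi\otimes e_{d(\psi)})^{\downarrow x}=\varphi^{\downarrow x}$ obtained from transitivity of focusing, distributivity, and stability. Your additional verification that $\leq'$ is a partial order is a detail the paper leaves implicit, and it checks out.
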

 \begin{proof}
   As the algebraic equations still hold, we only need to show conditions about
   the ordering, i.e.\ combination and marginalisation preserve partial order:
   \begin{enumerate} \item Assume that $\varphi_i \leq' \psi_i$ for $i = 1,
       2$. And let $y_i = d(\psi_i)$ and $y = y_1 \vee y_2 = d(\psi_1 \otimes
       \psi_2)$.  By assumption $\varphi_i \leq' \psi_i$ and the
       order-preservation property of $\otimes$, it follows that \[ (\varphi_1
         \otimes \varphi_2) \otimes e_{y} = (\varphi_1 \otimes \varphi_2)
         \otimes (e_{y_1} \otimes e_{y_2}) = (\varphi_1 \otimes e_{y_1}) \otimes
       (\varphi_2 \otimes e_{y_2}) \leq \psi_1 \otimes \psi_2 \] Therefore
       $\varphi_1 \otimes \varphi_2 \leq' \psi_1 \otimes \psi_2$.

     \item Assume that $\varphi \leq \psi$ and $x \leq d(\varphi)$ and let $y =
       d(\psi)$.  Observe that by transitivity of marginalisation and partial
       distributivity, \[ (\varphi \otimes e_y)^{\downarrow x} = \left((\varphi
         \otimes e_y)^{\downarrow d(\varphi)}\right)^{\downarrow x} =
       \left(\varphi \otimes e_y^{\downarrow d(\varphi)}\right)^{\downarrow x} =
     \left( \varphi \otimes e_{d(\varphi)} \right)^{\downarrow x} =
   \varphi^{\downarrow x} \] Therefore $\varphi \leq' \psi$, equivalently
   $\varphi \otimes e_y \leq \psi$, implies $\varphi^{\downarrow x} \leq'
   \psi^{\downarrow x}$ for any $x \leq d(\varphi)$, i.e. the focusing operation
   $\downarrow$ is order-preserving.  (Note that $\leq$ and $\leq'$ coincide for
   $\varphi, \psi$ with $d(\varphi) = d(\psi)$.)
   \end{enumerate}
 \end{proof}

By applying the Grothendieck construction (\Cref{prop:grothendieck-2}) to the
$D$-indexed family of ordered commutative monoids~$\Phi_x$
(\Cref{prop:valuation-algebras}), we have a partially ordered set $(\int \Phi,
\preceq)$.
The mapping  $(x, \varphi) \mapsto \varphi$ is evidently bijective since
$d(\varphi) = x$, and $(x, \varphi) \preceq (y, \psi) \iff \varphi \leq' \psi$ by
  definition. That is, the bijection $(x, \varphi) \mapsto \varphi$ is an order
  isomorphism between $(\int \Phi, \preceq)$ and $(\Phi, \leq')$.

It is clear that the domain operation $d\colon \Phi \to D$ is the projection
$p\colon \int \Phi \to D$ through the isomorphism, i.e.\ $p(x, \varphi) =
d(\varphi)$.  Similarly, $e_x \in \Phi_x$ is unique for each $x$, so it defines
$e\colon D \to \int \Phi$.

As for the combination operations $\otimes$ and $\boxtimes$, note that
$\boxtimes$ is given by \[ (x, \varphi) \boxtimes (y, \psi) = \left(z,
\varphi^{\uparrow z} \otimes \psi^{\uparrow z} \right) \] where $z = x \vee y$
and $\varphi^{\uparrow z} \otimes \psi^{\uparrow z} = \varphi \otimes \psi$ by
an easy calculation. Henceforth, $\otimes$ is the same as $\boxtimes$ via the
isomorphism.

It remains to derive the focusing operation from the Grothendieck construction.
To this point, we need a regularity condition: \begin{lemma}
  \label{lem:galois-connection} For any ordered valuation algebra $\Phi = (\Phi,
  \leq, D; \otimes, d, \downarrow, e)$, the following statements are true:
  \begin{enumerate} \item $\varphi^{\uparrow y} \leq \psi$ implies $\varphi \leq
    \psi^{\downarrow x}$.  \item If $e_x \leq \varphi$ for any $\varphi \in
    \Phi_x$ and $\Phi$ is \emph{regular},
      i.e.\ for any $\varphi$ and $x \leq d(\varphi)$ there is $\chi \in \Phi_x$
      such that $\varphi^{\downarrow x} \otimes \chi \otimes \varphi \leq
      \varphi$, then $\varphi \leq \psi^{\downarrow x}$ implies
      $\varphi^{\uparrow y} \leq \psi$.  \end{enumerate} \end{lemma}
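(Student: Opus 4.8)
The plan is to read the lemma as the two halves of the Galois connection $\varphi^{\uparrow y}\leq\psi \iff \varphi\leq\psi^{\downarrow x}$, under the standing assumptions forced by the notation: $\varphi\in\Phi_x$, $\psi\in\Phi_y$, and $x\leq y$, so that $\varphi^{\uparrow y}\in\Phi_y$ and $\psi^{\downarrow x}\in\Phi_x$ are defined and comparable with $\psi$ and $\varphi$ respectively. Both directions rest on one round-trip identity that I would isolate first, namely $(\varphi^{\uparrow y})^{\downarrow x}=\varphi$. Recalling $\varphi^{\uparrow y}=\varphi\otimes e_y$ and that $d(\varphi)=x$, this is exactly the calculation already carried out in the proof of the preceding proposition: distributivity of focusing over combination gives $(\varphi\otimes e_y)^{\downarrow x}=\varphi\otimes e_y^{\downarrow x\wedge y}$, stability of the identity under focusing rewrites $e_y^{\downarrow x}=e_x$ (using $x\wedge y=x$ since $x\leq y$), and the identity law collapses $\varphi\otimes e_x=\varphi$.

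For the first statement I would simply focus the hypothesis. From $\varphi^{\uparrow y}\leq\psi$, with both sides lying in $\Phi_y$, order-preservation of focusing yields $(\varphi^{\uparrow y})^{\downarrow x}\leq\psi^{\downarrow x}$, and the round-trip identity rewrites the left side as $\varphi$, so $\varphi\leq\psi^{\downarrow x}$. No regularity or least-element hypothesis is needed here, consistent with this being the \emph{free} half of the adjunction.

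For the second statement the strategy is dual. Applying $\uparrow y$ to $\varphi\leq\psi^{\downarrow x}$ and invoking order-preservation of the vacuous extension (\Cref{prop:valuation-algebras}) gives $\varphi^{\uparrow y}\leq(\psi^{\downarrow x})^{\uparrow y}$, so the claim reduces to the single inequality $(\psi^{\downarrow x})^{\uparrow y}=\psi^{\downarrow x}\otimes e_y\leq\psi$. Here regularity and the least-element assumption enter. I would apply regularity to $\psi$ at the domain $x\leq d(\psi)=y$ to obtain $\chi\in\Phi_x$ with $\psi^{\downarrow x}\otimes\chi\otimes\psi\leq\psi$; since $e_x$ is least in $\Phi_x$ we have $e_x\leq\chi$, and monotonicity of combination together with $\psi^{\downarrow x}\otimes e_x=\psi^{\downarrow x}$ gives $\psi^{\downarrow x}\otimes\psi\leq\psi^{\downarrow x}\otimes\chi\otimes\psi\leq\psi$. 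Finally $e_y\leq\psi$ (the least-element assumption, now in $\Phi_y$) and monotonicity of combination give $\psi^{\downarrow x}\otimes e_y\leq\psi^{\downarrow x}\otimes\psi\leq\psi$, which is the inequality we needed.

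The main obstacle is this second direction: neither regularity nor the bottom-element condition alone yields $(\psi^{\downarrow x})^{\uparrow y}\leq\psi$, and the delicate point is orienting the two monotonicity steps correctly — first inserting $e_x$ below the regularity witness $\chi$ to cancel it against $\psi^{\downarrow x}$, and only afterwards replacing $\psi$ by the smaller $e_y$ in the second factor. Everything else is bookkeeping of domains (all products landing in $\Phi_y$) together with the round-trip identity already established.
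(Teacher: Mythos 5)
Your proof is correct and follows essentially the same route as the paper: part one via the round-trip identity $(\varphi\otimes e_y)^{\downarrow x}=\varphi$ (distributivity plus stability) followed by monotonicity of focusing, and part two via regularity together with the least-identity assumption and monotonicity of $\otimes$. The only difference is cosmetic---you cancel the regularity witness $\chi$ against $e_x$ before replacing $\psi$ by $e_y$, whereas the paper compresses both steps into the single comparison $e_y\otimes e_y\leq\chi\otimes\psi$ in $\Phi_y$; your ordering is, if anything, slightly more careful about which domain each comparison lives in.
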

      \begin{proof} \begin{enumerate} \item Assume that $\varphi^{\uparrow y}
        \leq \psi$ or equivalently $\varphi \otimes e_y \leq \psi$.  By
        stability, it follows that $(\varphi \otimes e_y)^{\downarrow x} =
        \varphi \otimes e_y^{\downarrow x} = \varphi \otimes e_x = \varphi$.
        Therefore, \[ \varphi \otimes e_y \leq \psi \implies (\varphi \otimes
          e_y)^{\downarrow x} = \varphi \leq \psi^{\downarrow x}.  \]

    \item Assume $\varphi \leq \psi^{\downarrow x}$ and by regularity there is
      $\chi \in \Phi_x$ such that $\psi^{\downarrow x} \otimes \chi \otimes \psi
      \leq \psi$. Then, \[ \varphi^{\uparrow y} = \varphi \otimes e_y \leq
        \psi^{\downarrow x} \otimes (e_y \otimes e_y) \leq \psi^{\downarrow x}
    \otimes \chi \otimes \psi \leq \psi.  \] \end{enumerate} \end{proof}
    \begin{remark} The condition(s) in \Cref{lem:galois-connection} are studied
      in \cite{Pouly2011}. Idempotent valuation algebras are called
      \emph{information algebra} by Kohlas \cite{Kohlas2003}.
    \end{remark}

Every adjoint is uniquely determined by the other adjoint, so in particular the
focusing operation $\downarrow$ is uniquely determined by the vacuous extension
$\uparrow$. 

To sum up, we have shown that the combination operation $\otimes$ of an ordered
valuation algebra can be derived by the Grothendieck construction:
\begin{theorem} \label{thm:valuation-algebra}
  Every regular ordered valuation algebra $(\Phi, \leq, D; \otimes, \downarrow, e)$ with
  $e_x \leq \varphi$ for any $\varphi \in \Phi_x$ is isomorphic to the Grothendieck
  completion $(\int \Phi, \preceq, \boxtimes, 0)$ of the functor given by
  \Cref{prop:valuation-algebras}.
\end{theorem}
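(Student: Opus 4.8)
The plan is to exhibit the evident forgetful map $\theta\colon\int\Phi\to\Phi$, $(x,\varphi)\mapsto\varphi$, and to check that it carries each ingredient of the Grothendieck structure $(\int\Phi,\preceq,\boxtimes,0)$ onto the matching operation of the valuation algebra; since nearly every verification has already surfaced in the discussion preceding the statement, the proof is largely a matter of collecting them. First I would observe that $\theta$ is well defined and bijective, with inverse $\varphi\mapsto(d(\varphi),\varphi)$, because membership $(x,\varphi)\in\int\Phi$ means $\varphi\in\Phi_x$, i.e.\ $d(\varphi)=x$. By \Cref{prop:valuation-algebras} the fibres $\Phi_x$ together with the vacuous extensions $\Phi^x_y=(-)^{\uparrow y}$ form a genuine functor from the bounded lattice $D$ into the ordered commutative monoids, hence into $\OPCM$, so \Cref{prop:grothendieck-2} applies and equips $\int\Phi$ with a bona fide OPCM structure; this is the object against which $\theta$ is to be tested.

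Next I would match the remaining data one piece at a time. Unwinding the definition, $(x,\varphi)\preceq(y,\psi)$ reads $d(\varphi)\leq d(\psi)$ together with $\varphi^{\uparrow d(\psi)}\leq\psi$, which is precisely the extended relation $\varphi\leq'\psi$; thus $\theta$ is an order isomorphism of $(\int\Phi,\preceq)$ with $(\Phi,\leq')$. The only subtlety worth flagging here is that the target carries the extended order $\leq'$ rather than the original $\leq$, but the proposition just above shows $(\Phi,\leq')$ to satisfy all the valuation-algebra axioms bar the requirement that comparable valuations share a domain, and that $\leq$ and $\leq'$ coincide on same-domain pairs, so nothing is lost. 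For combination, with $z=x\vee y$ one computes $\theta\big((x,\varphi)\boxtimes(y,\psi)\big)=\varphi^{\uparrow z}\otimes\psi^{\uparrow z}=(\varphi\otimes e_z)\otimes(\psi\otimes e_z)=\varphi\otimes\psi$ using the identity and labelling axioms, so $\boxtimes$ transports to $\otimes$. The projection $p$ coincides with the domain map $d$ since $p(x,\varphi)=x=d(\varphi)$, the unit $0=(\bot,0_\bot)$ maps to $e_\bot$, and uniqueness of the identity within each fibre recovers the whole family $e$.

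The substantive step, which I expect to be the real obstacle, is the focusing operation $\downarrow$: it is not part of the OPCM data of $\int\Phi$, yet it must be reconstructed canonically if $\theta$ is to be an isomorphism of valuation algebras, and this is exactly where the two standing hypotheses are consumed. By \Cref{lem:galois-connection}, regularity together with $e_x\leq\varphi$ makes each vacuous extension $(-)^{\uparrow y}$ a lower adjoint whose upper adjoint is the focusing $(-)^{\downarrow x}$, so every structure map $\Phi^x_y$ of the functor possesses a right adjoint. The final clause of \Cref{prop:grothendieck-1} then shows the projection $p$ to be a bifibration, and the restriction maps it induces are precisely the focusing operations; since an adjoint is unique up to equivalence, $\downarrow$ is determined by $\uparrow$ and is accordingly preserved under $\theta$. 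The algebraic matchings above are routine, so the weight of the argument rests on this adjointness, which is what forces the regularity and least-identity conditions into the hypothesis.
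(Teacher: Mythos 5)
Your proposal is correct and follows essentially the same route as the paper: the bijection $(x,\varphi)\mapsto\varphi$, the identification of $\preceq$ with the extended order $\leq'$, the computation $\varphi^{\uparrow z}\otimes\psi^{\uparrow z}=\varphi\otimes\psi$ matching $\boxtimes$ with $\otimes$, and the recovery of focusing as the upper adjoint of vacuous extension via \Cref{lem:galois-connection}, which is exactly where the paper also spends the regularity and $e_x\leq\varphi$ hypotheses. Your added remark that the adjoints make the projection a bifibration in the sense of \Cref{prop:grothendieck-1} is a slight elaboration the paper leaves implicit, but the substance is identical.
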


\begin{remark}
  Both of \Cref{prop:grothendieck-2} and \Cref{thm:valuation-algebra} justify
  our claim that data linkage is made of data combination and changes of domain.
  The Grothendieck construction is in fact an equivalence of categories so that
  a pseudo-functor from a preorder to monoidal structures is essentially an
  opfibration equipped with a global monoidal structure. For interested readers,
  see~\cite[Theorem~12.7]{Shulman2008a}.
\end{remark}

\section{Concluding remarks}
Ubiquitous computing has led to ubiquitous data. Technologies exist that explore 
information content by combining data in a dataset and, in particular, linking
data from different datasets. Given the diversity of what passes for data---exact,
approximate, erroneous, fictitious---a very abstract conceptual framework is needed to
discover any general principles in today's
\emph{datafest}.

We have presented an abstract algebraic framework based on axiomatic notions
that model a data source, data representations and their combination `$\oplus$',
a measure of information content `$\preceq$', and linkage between data sources.
By stripping down intuitions we have found that \emph{ordered partial
commutative monoids} provide algebraic structures to be found at the heart of
many quite disparate data sharing situations. 

Our next steps are to map the scope of ordered partial commutative
monoids by exploring new and various 
\begin{enumerate}
  \item types of data, especially those in approximate reasoning such as belief
    functions and those discussed in uncertainty reasoning~\cite{Halpern2003}, etc; 

  \item types of operations on and between our algebras.
  
\end{enumerate}
Interestingly, there does not seem to be much of a theory of ordered partial
commutative monoids so that, too, is something to do. 

\bibliography{reference.bib}

\end{document}